\newcommand\given[1][]{\:#1\vert\:}
\newcommand\norm[1]{\left|\left|#1\right|\right|}
\newcommand{\e}{\mathrm{e}^}
\newcommand{\Kdim}{\mathrm{K}}
\newcommand{\PoARX}{\mathrm{PoARX}}
\newcommand{\PoARXpq}[3][2]{\ensuremath{\PoARX_{#1}(#2,#3)}}
\newcommand{\pgfP}[2][]{P_{#1}(#2)}
\newcommand{\infoF}[1][T]{\mathcal{F}_{#1}}
\newcommand{\infoG}[1][T]{\mathcal{G}_{#1}}
\newtheorem{theorem}{Theorem}
\newtheorem{definition}{Definition}
\newtheorem{lemma}{Lemma}
\DeclareMathOperator{\E}{E}
\DeclareMathOperator{\Cov}{Cov}
\DeclareMathOperator{\vect}{vec}
\DeclareMathOperator{\interior}{int}
\DeclareMathOperator*{\argmax}{arg\,max}
\begin{document}

\title{PoARX Modelling for Multivariate Count Time Series}

\author{Jamie Halliday\footnote{University of Manchester, United Kingdom, \texttt{jamie.halliday@manchester.ac.uk}} \and 
Georgi N. Boshnakov\footnotemark[\value{footnote}]}

\maketitle


\abstract{
This paper introduces multivariate Poisson autoregressive models with exogenous
covariates (PoARX) for modelling multivariate time series of counts.  We obtain
conditions for the PoARX process to be stationary and ergodic before proposing a
computationally efficient procedure for estimation of parameters by the
method of inference functions (IFM) and obtaining asymptotic normality of these
estimators. Lastly, we demonstrate an application to count data for the number 
of people entering and exiting a building, and show how the different aspects 
of the model combine to produce a strong predictive model. We conclude by 
suggesting some further areas of application and by listing directions for 
future work.
}

\section{Introduction}

The abundance of data brought about by the digital revolution has increased the
availability of time series of counts. Such data appear in many areas, including
statistics, econometrics, and the social and physical sciences.  For independent
count data, generalised linear models \citep{McCullaghNelder1989}
are widely used. The most popular distribution is the Poisson distribution, which has
attractive properties and is in some respects the count analogue of the Gaussian
distribution.  One restrictive property of the Poisson distribution however is that the mean
and the variance are equal -- this is rarely observed in applications. Naturally, many
alternatives have been proposed, see \citet{CameronTrivedi2013} for a comprehensive
review. In particular, the most common departures from the Poisson distribution are models
based on the negative binomial distribution, hurdle models, zero-inflated models,
Poisson-Normal mixture models, and finite mixtures models. \citet{Fokianos2012}
considers integer-valued autoregressive models for count time series and discusses 
estimation for both the Poisson model and the negative binomial model. Whilst the 
negative binomial model can account for over-dispersion present in the data, 
we have yet to mention a fix for under-dispersed data. \citet{McShaneEtAl2008} 
developed a count model based on the Weibull distribution that can handle both 
under-dispersed and over-dispersed data. Building on this idea, 
\citet{KharratBoshnakov2018} extended this approach to create a rich and 
flexible family of renewal count distributions, which greatly extends the toolbox of
distributions available for modelling count data. 

While for independent data the focus is on the provision of suitable distributions, in time 
series modelling the dependence presents additional challenges.
Models developed for modelling the dynamics of (continuous) time series often
provide adequate results for count data. The classic examples are ARMA models
\citep{BoxJenkins1970} and their multivariate extensions, which can be dealt
efficiently with state space methods \citep{DurbinKoopman2012}. A fruitful
approach, employed in ARCH and GARCH models \citep{Engle1982,Bollerslev1986},
uses a separate equation to model directly the dependence of the variance on the
past.  In order to improve the predictive accuracy, the aforementioned models
have been augmented with additional exogenous covariates.  ARMAX models
\citep{Hannan1988armaxbook,LikothanassisDemiris1998} allowed covariates to be
added to processes following an ARMA model, while GARCH-X \citep{Engle2002}
added the same feature to GARCH models.  \citet{ShephardSheppard2010} introduced
HEAVY models to improve prediction in high-frequency data, while
\citet{HansenEtAl2012} developed the Realised GARCH model, a class of GARCH-X
models for returns with an integrated model for realized measures of volatility.
There have been many efforts to extend the continuous GARCH model to the
multivariate case, summarised by \citet{BauwensEtAl2006}. These fall into three
categories: direct generalisations of the univariate GARCH model (VEC, BEKK and
factor models), linear combinations of univariate GARCH models (generalised
orthogonal models and latent factor models), and nonlinear combinations of
univariate GARCH models (DCC, GDC and copula-GARCH models).

The above models do not make specific provision for the non-negativity and 
integer-valued nature of count data. One approach has been to use the generalised 
linear model (GLM) methodology for time series data with an appropriate distribution, 
see \citet{KedemFokianos2002} for more details. Another approach is to use a
thinning operator to imitate ARMA models. These models are called integer
autoregressive moving average (INARMA) models and details can be found in
\citet{Weiss2008}. Furthermore, an integer-valued analogue of the GARCH model
was proposed by \citet{FerlandEtAl2006}, called INGARCH, which uses Poisson
deviates rather than normal innovations. \citet{FokianosEtAl2009} also used the
GARCH model for inspiration, as they aspired to create a Poisson model for
integer-valued time series containing an autoregressive feedback mechanism
similar to the volatility in GARCH models. They called this model the Poisson
autoregressive model and later the properties were extended to negative binomial
autoregressive models by \citet{ChristouFokianos2014}. \citet{AgostoEtAl2016}
proposed a class of dynamic Poisson models allowing for additional
(exogenous) covariates to strengthen the predictions. This was referred to as
the Poisson autoregressive model with exogenous covariates (PARX).

All models for count data mentioned so far are univariate.
Whilst the Poisson distribution has been widely used for univariate count models,
multivariate generalisations have been relatively sparse so far. \citet{InouyeEtAl2017} provide 
a summary of multivariate (Poisson) distributions for count data, with methods including
multivariate extensions of a parametric (Poisson) distribution and copula modelling using
univariate (Poisson) marginal distributions. For example, \citet{Lui2012} formulates a 
bivariate Poisson integer-values GARCH (BINGARCH) model using the parametric
bivariate Poisson distribution and argues that, given a suitable multivariate Poisson 
distribution, his framework is capable of dealing with the multivariate case. For predicting
the scores of football matches, \citet{KoopmanLit2015} have applied a parametric 
bivariate Poisson model, \citet{McHaleScarf2011} have used Frank's copula 
with Poisson and negative binomial marginal distributions, and 
\citet{BoshnakovEtAl2017} have used Frank's copula with Weibull count distributions 
as marginal distributions.

Our interest in this article lies in the modelling of multivariate count data. We use a copula 
approach to extend the (univariate) PARX model of \citet{AgostoEtAl2016} to multivariate 
count time series. This approach is flexible and tractable. Use of covariates in the Poisson 
model offers clear potential for better modelling and by including the time series 
covariates we allow over-dispersed data to be considered by our model. Implementation 
in R \citep{R} is available in the developmental package PoARX \citep{RPoARX}.

This paper is organised as follows. Section~\ref{S:MultiPoARX} introduces the 
multivariate PoARX model and gives stationarity and ergodicity conditions. In 
Section~\ref{S:Estimation} we discuss estimation of parameters by the method of 
inference functions (IMF) and obtain asymptotic results for the resulting estimators. 
Next, we consider prediction in Section~\ref{S:Forecasting}, looking at the generating 
functions for future horizons. Then we demonstrate an application of the PoARX model
in Section \ref{S:Applications} by analysing a bivariate time series of count data from 
\citet{IhlerEtAl2006}.  The time series represent the number of people entering and 
exiting a building on the University of California, Irvine (UCI) campus. Exogenous 
covariates, such as the occurrence of a meeting or conference are included in the 
model to aid predictive accuracy. We summarise our findings in 
Section~\ref{S:Conclusion} and outline suggestions for future work. 

\section{The multivariate PoARX model}
\label{S:MultiPoARX}

In this section we present the new class of models, introducing the necessary
background material about the univariate PoARX model and copulas, before 
focusing on the two-dimensional case and generalising to higher dimensions.  
For the purpose of this article we focus on using Frank's copula to capture 
dependence between time series, but any suitable copula could be used.

\subsection{The univariate PoARX model}
 
First, a note on terminology -- \citet{AgostoEtAl2016} use the abbreviation PARX for
this model but we prefer PoARX since it seems to suggest more clearly ``Poisson''
and avoids confusion with other meanings of ``P'' in similar abbreviations. For
example, PAR is often used to mean periodic autoregression.

Let $\{Y_t; \ t = 1,2, \dots \}$ denote an observed time series of counts, so
that $Y_t \in \{0,1,2, \dots \}$ for all $t = 1,2, \ldots$.  Further, let
$x_{t-1} \in \mathbb{R}^r$ denote a vector of additional covariates considered
for inclusion in the model. We say that $\{ Y_t \}$ is a univariate
PoARX($p$,$q$) process and write $\{ Y_t \} \sim \PoARX_1(p,q)$, if its dynamics
can be written as follows:
\begin{equation}
\label{E:uniPoARXModel}
\begin{gathered}
Y_t \given \infoF[t-1] \sim \text{Poisson} (\lambda_t),\\
\lambda_{t} = \omega + \sum_{l=1}^p \alpha_l Y_{t-l} + 
  \sum_{l=1}^q \beta_l \lambda_{t-l} + \eta \cdot x_{t-1},
\end{gathered}
\end{equation}
where $\infoF[t-1]$ denotes the $\sigma$-field of past knowledge,
$\sigma \{ Y_{1-p}, \dots, Y_{t-1}, \lambda_{1-q}, \dots, \lambda_{t-1}, x_1,
\dots, x_{t-1} \} $, $\text{Poisson}(\lambda)$ denotes a Poisson distribution
with intensity parameter $\lambda$, $\omega \geq 0$ is an intercept term,
$\{ \alpha_1, \dots , \alpha_p \}$ and $\{ \beta_1, \dots , \beta_q \}$ are
non-negative autoregressive coefficients, and $\eta$ is a vector of non-negative
coefficients for the exogenous covariates. Thus, the model for the intensity,
$\lambda_{t}$, uses the past $p$ values of the process, the past $q$ values of
the intensity and the covariates. 

In order to ensure that the process is stationary and ergodic with polynomial moments of 
a given order, we place two further restrictions on the model \citep{AgostoEtAl2016}. 
Firstly, the autoregressive coefficients must obey the following condition,
\begin{equation}
\label{E:uniPoARXrestriction}
\sum_{i=1}^{\max\{p,q\}} (\alpha_i + \beta_i) < 1.
\end{equation}
Additionally, we require that each component of the exogenous covariates, denoted
$x_t(k)$ to avoid confusion later, follows a Markov structure, that is,
\begin{equation}
\label{E:Markov}
x_{t}(k) = g(x_{t-1}(k), \dots, x_{t-m}(k) ; \epsilon_t), 
 \qquad k = 1, \dots, r, \qquad
\end{equation}
for some $m>0$ and some function $g(\bm{x}, \epsilon)$ with vector $\bm{x}$
independent of the observed $Y_t$ and unobserved $\lambda_t$, and with 
$\epsilon_t$ an i.i.d. error term.

\subsection{Copulas}

Copulas provide a well-defined approach to model 
multivariate data, with the dependence structure considered 
separately from the univariate margins \citep{Joe2005}.
A copula, $C$, is a multivariate distribution function with all univariate
margins having the $U(0,1)$ distribution \citep{Joe1997}.  More specifically,
let $U_i \sim U(0,1)$ for $i = 1,\dots, \Kdim$, be uniformly distributed random
variables, not necessarily independent. Their joint distribution function is the
copula
\begin{equation*}
C(u_1, \dots u_\Kdim) = 
 \Pr \left( U_1 \leq u_1, \dots U_\Kdim \leq u_\Kdim \right), 
 \quad 
 0 \leq u_1, \dots, u_\Kdim \leq 1.
\end{equation*}
In particular, the copula $C$ is a function mapping the $\Kdim$-dimensional
unit cube, $[0,1]^\Kdim$, onto the interval $[0,1]$. Note that
the distribution corresponding to the copula is also called a copula.

The dependence structure for the random variables $U_1, \dots U_\Kdim$ is
contained in $C$, parametrised by a dependence parameter $\rho$, which
can be a vector.  Copula theory has developed from a theorem by \citet{Sklar},
which states that any multivariate distribution can be represented as a function
of its marginals.

\begin{theorem}[Sklar's Theorem]
Let $F$ be a joint distribution function with marginals 
$F_1, \dots F_\Kdim$. Then there exists a copula 
$C$:$[0,1]^\Kdim \to [0,1]$ such that
\begin{equation*}
F(y_1, \dots y_\Kdim) = 
 C \left( F_1(y_1), \dots F_\Kdim(y_\Kdim) \right), 
  \quad 
   y_1, \dots y_\Kdim \in \mathbb{R}.
\end{equation*}
\end{theorem}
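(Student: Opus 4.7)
The plan is to split the argument into the continuous case (which is conceptually transparent) and then the general case (which needs an extension argument to cope with atoms in the marginals). In both cases the copula is built explicitly from $F$ together with the generalised inverses $F_i^{-1}(u) = \inf\{y \in \mathbb{R} : F_i(y) \ge u\}$, so the central task is to verify that the constructed object is indeed a distribution function on $[0,1]^K$ with uniform margins.

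First I would treat the case in which every $F_i$ is continuous. Define
\begin{equation*}
C(u_1, \dots, u_\Kdim) := F\bigl(F_1^{-1}(u_1), \dots, F_\Kdim^{-1}(u_\Kdim)\bigr),
\quad (u_1,\dots,u_\Kdim) \in [0,1]^\Kdim.
\end{equation*}
To show that $C$ is a copula I would check the two defining properties. Monotonicity in each argument and the appropriate $K$-increasing (rectangle) inequality follow from the corresponding properties of $F$ together with the monotonicity of each $F_i^{-1}$. For the uniform marginals I would use the standard probability integral transform: continuity of $F_i$ gives $F_i(F_i^{-1}(u)) = u$, hence the $i$-th marginal of $C$ equals $u_i$. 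Plugging $u_i = F_i(y_i)$ into the definition of $C$ and using $F_i^{-1}(F_i(y_i)) \le y_i$ along with monotonicity of $F$ then yields the Sklar identity $F(y_1,\dots,y_\Kdim) = C(F_1(y_1),\dots,F_\Kdim(y_\Kdim))$.

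For the general case (allowing discrete or mixed marginals), the formula above still defines a function on the range $\mathrm{Ran}(F_1)\times\cdots\times\mathrm{Ran}(F_\Kdim)$, and it still satisfies the composition identity on that set; the problem is that $F_i$ may not be surjective onto $[0,1]$, so $C$ is not uniquely determined off this product of ranges. I would extend $C$ to all of $[0,1]^\Kdim$ by multilinear interpolation across the gaps left by the atoms (equivalently, by applying the distributional transform $F_i^*(y,V) = F_i(y-) + V\,[F_i(y)-F_i(y-)]$ with $V\sim U(0,1)$ independent of everything, showing the joint law of the transformed variables has uniform margins, and letting $C$ be that joint distribution). Either route produces a valid copula that agrees with the formula above on $\mathrm{Ran}(F_1)\times\cdots\times\mathrm{Ran}(F_\Kdim)$, and the composition identity persists because it only involves evaluating $C$ at points of that product.

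The main obstacle is this extension step: showing that the extended function remains $K$-increasing and right-continuous, and that its marginals are exactly uniform, requires some care when the marginals have atoms. Once this is handled, the rest of the proof is a direct verification. I would also note that the theorem only asserts existence (and uniqueness holds merely on $\mathrm{Ran}(F_1)\times\cdots\times\mathrm{Ran}(F_\Kdim)$), which is precisely why the extension in the non-continuous case is not canonical.
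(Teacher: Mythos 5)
The paper does not prove this statement: Sklar's Theorem is quoted as classical background and attributed to \citet{Sklar}, so there is no in-paper argument to compare against. Judged on its own terms, your sketch is the standard textbook proof (continuous case via $C(u)=F(F_1^{-1}(u_1),\dots,F_\Kdim^{-1}(u_\Kdim))$, then an extension argument for marginals with atoms, e.g.\ the distributional transform of R\"uschendorf or multilinear interpolation as in Nelsen), and it is essentially correct, including the important remark that uniqueness holds only on $\mathrm{Ran}(F_1)\times\cdots\times\mathrm{Ran}(F_\Kdim)$. One step is stated too quickly: in the continuous case, $F_i^{-1}(F_i(y_i))\le y_i$ plus monotonicity of $F$ only gives $C(F_1(y_1),\dots,F_\Kdim(y_\Kdim))\le F(y_1,\dots,y_\Kdim)$. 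For the reverse inequality you must observe that the discrepancy is controlled by the marginals, namely
\begin{equation*}
0 \le F(y_1,\dots,y_\Kdim) - F\bigl(F_1^{-1}(F_1(y_1)),\dots,F_\Kdim^{-1}(F_\Kdim(y_\Kdim))\bigr)
\le \sum_{i=1}^{\Kdim}\bigl[F_i(y_i) - F_i(F_i^{-1}(F_i(y_i)))\bigr] = 0,
\end{equation*}
where the last equality uses $F_i(F_i^{-1}(u))=u$ for continuous $F_i$ --- the same fact you already invoke for the uniform margins. With that half-step supplied, and with the acknowledged care in verifying that the interpolated extension remains $\Kdim$-increasing with exactly uniform margins, the argument goes through.
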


Copulas allow for flexible joint modelling of multivariate data whilst
retaining control over the dependence structure between the
variables. Whilst the copula must act upon uniform random 
variables, it is straightforward to apply the probability integral 
transform \citep{PIT} to create the required variables. Furthermore, 
estimation of parameters of the univariate margins and the copula 
itself can be performed separately. This can be seen in the 
approach taken by \citet{Joe1997}, who suggested a two-stage 
process of estimation, fitting first the univariate margins to the 
respective variables before fitting the copula to find 
the dependence parameter.

An important class of copulas are called Archimedean copulas.  They are
developed using Laplace transforms and mixtures of powers of univariate
densities to create multivariate distributions.  They have many nice properties
and can be constructed easily \citep{Nelsen2006} from a generator function
$\varphi(\cdot)$ and its pseudo-inverse, $\varphi^{[-1]}(\cdot)$, defined as
follows.
\begin{definition}[Pseudo-Inverse]
Let $\varphi$ be a continuous, strictly decreasing function from 
$\mathbf{I} = [0,1]$ to $[0, \infty]$ such that $\varphi(1) = 0$. 
The pseudo-inverse of $\varphi$ is:
\begin{equation*}
  \varphi^{[-1]}(t)
  =
  \begin{cases}
    \varphi^{-1}(t) & 0 \leq t \leq \varphi(0)      , \\
    0              &  \varphi(0) \leq t \leq \infty .
  \end{cases}
\end{equation*}
\end{definition} 
The pseudo-inverse, $\varphi^{[-1]}$, is continuous and non-increasing on 
$[0, \infty]$ and strictly decreasing on $[0, \varphi(0)]$. If $\varphi(0) = \infty$, 
then $\varphi^{[-1]}(t)  = \varphi^{-1}(t)$. 

An Archimedean copula in $\Kdim$ dimensions is constructed by the following 
equation, given a generator function $\varphi(\cdot)$ \citep{Joe1997},
\begin{equation}
\label{E:ArchimedeanCopula}
C(u_1,\dots u_\Kdim) = \varphi^{[-1]} \left( \sum_{i=1}^\Kdim \varphi(u_i) \right)
.
\end{equation}
To ensure that this satisfies the conditions for a copula, see the conditions placed on
$\varphi(\cdot)$ and $\varphi^{[-1]}(\cdot)$ in \citet{McNeilNeslehova2009}.

Frank's copula \citep{Nelsen2006} is one example of an Archimedean copula where 
the dependence parameter can take any value except zero in the two-dimensional case
($\rho \in \mathbb{R} \backslash \{ 0 \}$). This is an advantage of Frank's copula over 
many other common Archimedean copulas, as we can account for both positive and 
negative dependence. The generator function is
\begin{equation}
\label{E:FCGenerator}
\varphi_\rho(t) = - \log \left( \frac{ \exp ( - \rho t) - 1}{
 \exp( - \rho) - 1} \right),
\end{equation}
and its pseudo-inverse can be written explicitly as
\begin{equation}
\label{E:FCGeneratorInverse}
  \varphi_\rho^{[-1]}(t) = \varphi_\rho^{-1}(t)
  =  - \frac{1}{\rho} \log \left( 1 + \exp(-t)(\exp(-\rho) - 1) \right)
\end{equation}
By substituting these functions into Equation~\eqref{E:ArchimedeanCopula}
we obtain Frank's copula. Since $\varphi_\rho(0) = \infty$, 
Equation~\eqref{E:FCGeneratorInverse} is true for all $t \ge 0$.
We use the subscript $\rho$ to distinguish Frank's
copula from the general case.

In higher dimensions, the dependence parameter is limited to values in $(0, \infty)$, 
but in any case the limit as $\rho \to 0$ corresponds to independence. Indeed, from 
the easily verifiable limits $\lim_{\rho \to 0} \varphi_\rho(t) = -\log(t)$ and 
$\lim_{\rho \to 0} \varphi^{-1}_\rho(t) = \exp(-t)$, it follows that
\begin{equation*}
\begin{aligned}
\lim_{\rho \to 0} C_\rho (u_1, \dots, u_\Kdim) 
 &= \exp \left( - \sum_{i=1}^\Kdim - \log(u_i) \right) \\
 &= \exp \left( \log \left( \prod_{i=1}^\Kdim u_i \right) \right ) \\
 &= \prod_{i=1}^\Kdim u_i
 ,
\end{aligned}
\end{equation*}
which is the joint cumulative density function of independent $U(0,1)$ random variables.

To conclude the discussion of copulas, we give the probability mass function (pmf) for 
$\Kdim$-dimensional discrete distributions \citep{Nelsen2006}. In the discrete case
the copula is no longer unique due to the presence of stepwise marginal distribution 
functions \citep{Joe2014}. Despite this issue, copula models are still valid constructions 
for discrete distributions \citep{GenestNeslehova2007}. The pmf is given as
\begin{equation}
\label{E:discreteCopulapmf}
\begin{aligned}
\Pr(Y_1 = y_1, \dots, Y_\Kdim = y_\Kdim)
 &= \sum_{l_1=0}^1 \cdots \sum_{l_\Kdim=0}^1
  (-1)^{l_1 + \dots + l_\Kdim} 
  \Pr(Y_1 \leq y_1 - l_1, \dots, Y_\Kdim \leq y_\Kdim - l_\Kdim) \\
 &= \sum_{l_1=0}^1 \cdots \sum_{l_\Kdim=0}^1
  (-1)^{l_1 + \dots + l_\Kdim} 
  C \left( F_1(y_1 - l_1), \dots, F_\Kdim(y_\Kdim - l_\Kdim) \right)
  ,
\end{aligned}
\end{equation}
where $C$ is any copula from Sklar's theorem. 

\subsection{The bivariate PoARX model}
\label{sect:BiPoARX}

We start with the two-dimensional case since it is of interest on its own and the notation 
is somewhat simpler.  Let $\{Y_t = (Y_t^1, Y_t^2), \ t = 1,2, \dots \}$ be a bivariate time 
series of counts with associated exogenous covariates 
$\{ x_{t-1}^j = (x_{t-1}^j(1), x_{t-1}^j(2))^\top, \ j = 1, 2 \}$. Then the collection of 
exogenous covariates associated with $Y_t$ is the matrix 
\begin{equation*}
x_{t-1} = (x_{t-1}^1, x_{t-1}^2)^\top = 
\begin{bmatrix}
x_{t-1}^1(1) & x_{t-1}^1(2) \\
x_{t-1}^2(1) & x_{t-1}^2(2)
\end{bmatrix}.
\end{equation*}
We say that $\{ Y_t \}$ is a bivariate PoARX($p$,$q$) process and write 
$\{ Y_t \} \sim \PoARXpq pq$, if each of the component time series is a univariate 
PoARX process (see Equation~\eqref{E:uniPoARXModel}) and the joint conditional 
distribution is a copula Poisson. 

More formally, let $\mathcal{D}(\lambda^{1},\lambda^{2};\rho)$ be a bivariate distribution 
based on Frank's copula with dependency parameter $\rho$ and marginals 
Poisson($\lambda^{1}$) and Poisson($\lambda^{2}$).  Let also $\{Y_{t}^{1}\}$ and 
$\{Y_{t}^{2}\}$ be univariate PoARX processes with intensities $\lambda_{t}^{j}$, for 
$j=1,2$. Letting $\lambda_{t} = \left( \lambda_{t}^{1}, \lambda_{t}^{2} \right)$,
denote by $\infoF[t-1]$ the $\sigma$-field generated by all past observations
and exogenous covariates:
\begin{equation*}
  \infoF[t-1]
  = \sigma \{ Y_{1-p}, \dots, Y_{t-1},
              \lambda_{1-q}, \dots, \lambda_{t-1},
              x_1, \dots, x_{t-1}
            \}
  .
\end{equation*}

The process $\{Y_t = (Y_t^1, Y_t^2), \ t = 1,2, \dots \}$ is a $\PoARXpq pq$ process if 
the conditional distribution of $Y_{t}$ is
\begin{equation*}
    Y_t \given \infoF[t-1] \sim \mathcal{D} 
     (\lambda_t^1, \lambda_t^2; \rho)
    ,
\end{equation*}
where $\lambda_{t}^{1},\lambda_{t}^{2}$ are the intensities of $\{Y_{t}^{1}\}$ and 
$\{Y_{t}^{2}\}$, respectively, with dynamics specified by the equations:
\begin{equation*}
\begin{gathered}
  Y_t^j \given \infoF[t-1] \sim \text{Poisson} (\lambda_t^j), 
   \qquad j = 1, 2; \\
  \lambda_{t}^j = \omega^j + \sum_{l=1}^p \alpha_l^j Y_{t-l}^j + 
    \sum_{l=1}^q \beta_l^j  \lambda_{t-l}^j + \eta^j \cdot x_{t-1}^j,
    \qquad j=1,2; 
\end{gathered}
\end{equation*}
where $\alpha^j_l, \beta^j_l \geq 0$ denote coefficients for the past values of the 
observations and intensities respectively, $\eta^j$ denotes the vector of (non-negative) 
coefficients for the exogenous covariates, and $\omega^j \geq 0$ denotes an (optional) 
intercept term.

From the above specifications it follows that the (bivariate) conditional distribution 
function of $Y_{t}$ is
\begin{equation*}
  F(y; \lambda, \rho)
  = C_{\rho} (F_1(y^1; \lambda^1), F_2(y^2; \lambda^2))
  ,
\end{equation*}
where $C_{\rho}$ is Frank's copula function, and $F_{1}$ and $F_{2}$ are
the distribution functions of the Poisson marginals, i.e.
\begin{equation*}
  F_j(x; \mu)
  = \sum_{k=0}^x \e{-\mu} \frac{\mu^k}{k!}, \qquad j = 1,2
  .
\end{equation*}

\subsection{The multivariate PoARX model}

The extension to the multivariate case is straightforward. Let 
$\{ Y_t = (Y_t^1, \dots, Y_t^\Kdim), \  t = 1,2, \dots \}$ be a multivariate time series and let 
$\{ x_{t-1}^j = (x_{t-1}^j(1), \dots, x_{t-1}^j(r))^\top, \ j = 1, 2, \dots, \Kdim \}$ 
be the matrix of exogenous covariates associated with $Y_t$.
We say that $\{ Y_t \}$ is a PoARX process and write $\{ Y_t \} \sim \PoARXpq[\Kdim]pq$, 
if each of the component time series is a univariate PoARX  process and the joint 
conditional distribution is a copula Poisson. Let the intensities of PoARX processes 
be $ \{ \lambda_{t}^{j}; \ t = 1, 2 \dots, \ j=1, \dots, \Kdim \}$ and be denoted using
$\lambda_t = \left( \lambda_{t}^{1}, \dots \lambda_{t}^{\Kdim} \right)$.

Analogously to the previous section, let $\mathcal{D}(\lambda^{1},\dots, \lambda^{\Kdim};\rho)$ be a 
multivariate distribution based on Frank's copula with marginal distributions 
Poisson($\lambda^{1}$), $\dots$, Poisson($\lambda^{\Kdim}$) and dependency parameter $\rho$.
Let also
\begin{equation}
\label{E:FC}
  C_{\rho} (u_1, \dots u_\Kdim)
  = \varphi_\rho^{-1} \left( \sum_{k=1}^\Kdim  \varphi_\rho(u_k) \right)
  ,
\end{equation}
where $\varphi_\rho$ and $\varphi_\rho^{-1}$ are the generator function and
its pseudo-inverse of the Frank's copula from Equations
\eqref{E:FCGenerator} -- \eqref{E:FCGeneratorInverse}.
Before stating the entire behaviour of the multivariate model, the distribution 
function corresponding to $\mathcal{D}(\lambda^{1},\dots, \lambda^{\Kdim};\rho)$ is
\begin{equation}
\label{E:multivariateCDF}
F(y; \lambda, \rho) = 
   C_{\rho} (F_1(y^1; \lambda^1), \dots, F_\Kdim(y^\Kdim; \lambda^\Kdim)).
\end{equation}

\begin{subequations} 
\label{E:PoARX}
  
The conditional distribution of $Y_{t}$ is a Frank's copula distribution
\begin{equation}
\label{E:PoARXa}
    Y_t \given \infoF[t-1] \sim \mathcal{D} (\lambda_t^1, \dots 
    \lambda_t^\Kdim; \rho)
    ,
\end{equation}
where $\infoF[t-1]$ denotes the $\sigma$-field defined by all previous 
observations and exogenous covariates,  
$\sigma \{ Y_{1-p}, \dots, Y_{t-1}, 
\lambda_{1-q}, \dots, \lambda_{t-1}, x_1, \dots, x_{t-1} \}$, where each term 
contains information on all components of the time series. As before, the 
dynamics of the components of $Y_{t}$ are specified by the equations:
\begin{gather}
Y_t^j \given \infoF[t-1] \sim \text{Poisson} (\lambda_t^j), \qquad 
j = 1, \dots, \Kdim;
\label{E:PoARXb}
\\
\lambda_{t}^j = \omega^j + \sum_{l=1}^p \alpha_l^j Y_{t-l}^j + \sum_{l=1}^q 
\beta_l^j  \lambda_{t-l}^j + \eta^j \cdot x_{t-1}^j, \qquad j=1, \dots, \Kdim;
\label{E:PoARXc}
\end{gather}
where $\alpha^j_l, \beta^j_l \geq 0$ denote coefficients for the past values of
the observations and intensities respectively, $\eta^j$ denotes the vector of 
(non-negative) coefficients for the exogenous covariates, and $\omega^j \geq 0$ 
denotes an (optional) intercept term. For each univariate process, the two conditions in 
Equations~\eqref{E:uniPoARXrestriction} and~\eqref{E:Markov} 
must hold. 
\end{subequations}

\subsection{Properties of multivariate PoARX}

Here we prove stationarity and ergodicity of PoARX models using the properties of 
univariate PoARX processes, developed in \citet{AgostoEtAl2016}, and 
$\tau$-weak dependence. $\tau$-weak dependence is a stability concept developed by 
\citet{DoukhanWintenberger2008} for Markov chains that implies stationarity and ergodicity.
To aid the establishment of asymptotic properties later, it is advantageous to express
each PoARX process in terms of a sequence of independent Poisson realisations. 
Specifically, introduce $\{N_t^j( \cdot ),t = 1, 2, \dots \}$ for $j = 1, 2, \dots, K$ and let 
each set be a sequence of independent Poisson processes of unit intensity, such
that $Y_t^j$ is equal to $N_{t}^{j}(\lambda_t^j)$, the number of events in the
time interval $[0, \lambda_t^j]$. Then the model can be rewritten as 
\begin{equation}
\label{E:multiPoARXProcess}
\begin{gathered}
Y_t^j = N_t^j(\lambda_t^j), \quad \text{for } j = 1,2, \dots, \Kdim, \\
\lambda_t^j = \omega^j + \sum_{l=1}^p \alpha_l^j Y_{t-l}^j + 
 \sum_{l=1}^q \beta_l^j \lambda_{t-l}^j + \eta^j \cdot x_{t-1}^j,
\end{gathered}
\end{equation}
assuming all terms used to initialise, 
$\{Y_0, Y_{-1}, \dots Y_{1-p}, \lambda_0, \lambda_{-1}, \dots \lambda_{1-q} \}$ are known 
and fixed, noting that each $\{ Y_t \}$ and $\{ \lambda_t \}$ is a $\Kdim$-dimensional 
vector. Now, we impose a simpler Markov structure to help state and prove the results,
\begin{equation}
\label{E:MarkovCovariates}
x_t^j(k) = g^j \left (x^j_{t-1}(k); \epsilon_t^j \right), 
  \qquad j = 1,\dots, \Kdim, \qquad k = 1, \dots, r.
\end{equation}
However, the statements hold for the more general structure found in 
Equation~\eqref{E:Markov}. We also make three assumptions
similar to those found in \citet{AgostoEtAl2016} for the univariate model.

\paragraph{Assumption 1 (Markov)} The innovations $\epsilon_t^j$ and Poisson 
processes $N_t^j(\cdot)$ are i.i.d. for all $j = 1, 2, \dots, \Kdim$.
\paragraph{Assumption 2 (Exogenous Stability)} 
$$\E \norm{g^j \left (x^j; \epsilon_t^j \right) - g^j \left (\tilde{x}^j; \epsilon_t^j \right)}^s
\leq \kappa \norm{x^j - \tilde{x}^j}^s$$ for some $\kappa < 1$ and 
$\E \norm{g^j \left( 0 ; \epsilon_t^j \right)}^s < \infty$ for all $j = 1, 2, \dots, \Kdim$, 
for some $s \geq 1$.
\paragraph{Assumption 3 (PoARX Stability)} 
$\sum_{i=1}^{\max (p,q)} \left( \alpha_i^j + \beta_i^j \right) < 1$, 
 for each $j = 1, 2, \dots, \Kdim$. \\

In the formulae below the operator $\vect$ has its usual meaning. For a matrix
$A$, $\vect(A)$ is a (column) vector obtained by stacking the columns of $A$ on
top of each other.  As a shorthand, $\vect(A_{1},\dots, A_{m})$ is equivalent to
the more verbose $\vect ( \vect(A_{1}), \dots, \vect(A_{m}) )$.

\begin{theorem}
\label{thm:StationaryPoARX}
Under Assumptions 1 -- 3 and the Markov assumption in 
Equation~\eqref{E:MarkovCovariates}, there exists a weakly dependent stationary 
and ergodic solution, $X_t^* = \vect \left( (Y_t^*, \lambda_t^*, x_{t-1}^*) \right)$,
to  Equations~\eqref{E:PoARX}. The solution is such that 
$\E \left( \norm{X_t^*}^s \right) < \infty$, where $s \geq 1$ is found in Assumption 2, 
$Y^*_t = (Y_t^{*1}, \dots Y_t^{*\Kdim})^\top$ and 
$\lambda^*_t = (\lambda_t^{*1}, \dots \lambda_t^{*\Kdim})^\top$
are $\Kdim$-vectors, and $x^*_{t-1} = (x_{t-1}^{*1}, \dots x_{t-1}^{*\Kdim})^\top$ is a
$\Kdim \times r$ matrix.
\end{theorem}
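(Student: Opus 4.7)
The plan is to cast the multivariate PoARX system as a time-homogeneous Markov chain driven by i.i.d.\ innovations and then invoke the $\tau$-weak dependence machinery of \citet{DoukhanWintenberger2008}. Their framework guarantees that if a Markov recursion $X_t = F(X_{t-1}; \xi_t)$ with i.i.d.\ $\xi_t$ satisfies an $L^s$-contraction $\E \norm{F(X;\xi_t) - F(\tilde X;\xi_t)}^s \leq (\kappa^*)^s \norm{X - \tilde X}^s$ with $\kappa^* < 1$, together with $\E \norm{F(0;\xi_t)}^s < \infty$, then there exists a unique stationary ergodic solution, $\tau$-weakly dependent with geometrically decaying coefficients and finite $s$-th moment. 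The task therefore reduces to constructing a suitable state vector and verifying these two conditions.

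For the state, I would stack the $p$ most recent values of $Y_t$, the $q$ most recent values of $\lambda_t$, and $x_{t-1}$ into a single vector, which under the simpler Markov covariate structure \eqref{E:MarkovCovariates} renders the recursion first-order Markov. The innovations $\xi_t$ consist of the $\Kdim$ independent unit-intensity Poisson processes $N_t^j(\cdot)$ of Equation~\eqref{E:multiPoARXProcess} together with the covariate innovations $\epsilon_t^j$; Assumption~1 makes $\xi_t$ i.i.d.\ across $t$. The map $F$ first advances each covariate via $x_t^j(k) = g^j(x_{t-1}^j(k); \epsilon_t^j)$, then computes $\lambda_t^j$ from the stored lags and the new $x_{t-1}^j$ through \eqref{E:PoARXc}, and finally reads off $Y_t^j = N_t^j(\lambda_t^j)$, shifting the older lags by one position.

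Verifying the contraction block by block is the technical core. Coupling two trajectories through the same $\xi_t$, the covariate block contributes $\kappa^{s} \norm{x_{t-1}^{(1)} - x_{t-1}^{(2)}}^s$ by Assumption~2. For the intensity and observation blocks, the recursion is linear in the lagged $Y$'s and $\lambda$'s with coefficients $\alpha_l^j$ and $\beta_l^j$, so up to the nonlinear Poisson step one needs the per-component bound
\begin{equation*}
  \E \abs{N_t^j(\lambda) - N_t^j(\tilde\lambda)}^s \leq C_s \abs{\lambda - \tilde\lambda}^s,
\end{equation*}
which holds with equality for $s=1$ and for general $s\geq 1$ follows by the conditional Jensen / moment argument used in the univariate analysis of \citet{AgostoEtAl2016}. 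Since the $N_t^j(\cdot)$ are independent across $j$, the multivariate contraction decouples componentwise, and Assumption~3 forces a per-component factor strictly below one; combining the blocks by the triangle inequality in $L^s$ and taking the maximum with $\kappa$ yields a global contraction $\kappa^* < 1$. The integrability $\E \norm{F(0; \xi_t)}^s < \infty$ is then immediate from the second half of Assumption~2 and from the finiteness of moments of $N_t^j(\omega^j)$.

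The hard part will be the $L^s$-Poisson bound above, since it is the only genuinely non-linear step and the one specific to the count setting; everything else reduces to linear algebra governed by Assumptions~2 and 3. Once it is in place, \citet{DoukhanWintenberger2008} immediately yields a stationary, ergodic, $\tau$-weakly dependent solution $X_t^* = \vect((Y_t^*, \lambda_t^*, x_{t-1}^*))$ of Equations~\eqref{E:PoARX} with $\E \norm{X_t^*}^s < \infty$, which is the conclusion sought. Note that the copula parameter $\rho$ does not enter the argument at all, because the conditional marginals of $Y_t^j$ remain Poisson with mean $\lambda_t^j$ irrespective of $\rho$; the copula merely reshapes the joint conditional law without affecting the contraction.
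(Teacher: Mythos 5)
Your route is genuinely different from the paper's. You propose to stack the whole multivariate system into one state vector $X_t = F(X_{t-1};\xi_t)$ and verify the contraction hypotheses of \citet{DoukhanWintenberger2008} directly for that chain. The paper instead bootstraps: it takes the univariate results of \citet{AgostoEtAl2016} as given (each marginal PoARX component is stationary, ergodic and weakly dependent), treats the case $\rho=0$ by factorising conditional probabilities, and for $\rho\neq 0$ argues by induction on the dimension $\Kdim$ that the joint conditional probabilities, being a ``simple transformation'' of the marginal ones through Frank's copula, converge to the unconditional ones as the conditioning recedes into the past. Your approach is more self-contained and, if carried out, would deliver the moment bound $\E\norm{X_t^*}^s<\infty$ as part of the same fixed-point argument rather than as an afterthought; the paper's approach avoids redoing the contraction but leans heavily on the copula transformation being well behaved under limits.

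Two concrete problems in your write-up. First, the bound you isolate as the technical core, $\E\abs{N_t^j(\lambda)-N_t^j(\tilde\lambda)}^s\leq C_s\abs{\lambda-\tilde\lambda}^s$, is false for $s>1$: with $\delta=\abs{\lambda-\tilde\lambda}$ the left-hand side is the $s$-th moment of a Poisson$(\delta)$ variable, which is of order $\delta$ as $\delta\to 0$ and therefore dominates $C_s\delta^s$ for small $\delta$. The framework of \citet{DoukhanWintenberger2008}, as used in \citet{AgostoEtAl2016}, only requires an order-one (Lipschitz in mean) contraction to obtain existence, stationarity, ergodicity and $\tau$-weak dependence; the finiteness of the $s$-th moment for $s\geq 2$ must then be obtained by a separate recursive moment argument using Assumption~3, not by an $L^s$-Lipschitz property of the Poisson step. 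Second, your claim that the contraction ``decouples componentwise since the $N_t^j(\cdot)$ are independent across $j$'' conflates the independence-coupled model with the copula model: if the unit-intensity processes are independent across $j$, the representation $Y_t^j=N_t^j(\lambda_t^j)$ produces conditionally independent components, not the Frank-copula joint law of Equation~\eqref{E:PoARXa}. To represent the copula model as an iterated random function you need cross-sectionally dependent innovations (e.g.\ $U_t\sim C_\rho$ i.i.d.\ over $t$ with $Y_t^j$ the Poisson quantile transform of $U_t^j$ at intensity $\lambda_t^j$); the componentwise contraction bound survives because it only uses the marginal law of each innovation, but that is the argument you need to make, not independence across $j$.
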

\begin{proof}
See \ref{A:ProofStationary}.
\end{proof}

A consequence of Theorem \ref{thm:StationaryPoARX} is that it allows PoARX models to
use the (weak) law of large numbers (LLN) for stationary and ergodic processes. To 
ensure the correct analysis of asymptotic behaviour, we need to be able to use the LLN
for any initialisation, rather than a set of fixed initial values. Lemma~\ref{L:PoARXLLN} 
extends the LLN to hold for this case. The proof is no different to the univariate case in 
\citet{AgostoEtAl2016}, where the reader is directed to \citet{KristensenRahbek2015}.

\begin{lemma}
\label{L:PoARXLLN}
Let $X_t = \vect \left( (Y_t, \lambda_t, x_{t-1})^\top \right)$ be a process satisfying 
$X_t = F(X_{t-1}; \xi_t)$ with $\xi_t$ i.i.d, 
$\E \norm{F(x; \xi_t) - F(\tilde{x}; \xi_t)}^s \leq \kappa \norm{x - \tilde{x}}^s$, 
and $\E \norm{F(0; \xi_t)}^s < \infty$. For any function $h(x)$ satisfying:
\begin{itemize}
   \item[(i).] $\norm{h(x)}^{1+\delta} \leq M(1 + \norm{x}^s)$ for some $M, \delta  > 0$,
   \item[(ii).] for some $c>0$ there exists $L_c > 0$ such that 
     $\norm{h(x) - h(\tilde{x})} \leq L_c \norm{x - \tilde{x}}$ for $\norm{x - \tilde{x}} < c$,
\end{itemize}
it holds that 
$$ \frac{1}{T} \sum_{t=1}^T h(X_t) \overset{P}{\to} \E \left( h(X_t^*) \right), \quad
 \text{as } T \to \infty. $$
\end{lemma}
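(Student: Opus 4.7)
The plan is to reduce the statement to the ergodic theorem for the stationary solution $X_t^*$ supplied by Theorem~\ref{thm:StationaryPoARX}. I would couple the process $X_t$ (with its fixed initialisation) to $X_t^*$ by driving both with the \emph{same} innovation sequence $\{\xi_t\}$, and then write
\begin{equation*}
\frac{1}{T}\sum_{t=1}^T h(X_t)
 = \frac{1}{T}\sum_{t=1}^T h(X_t^*)
   + \frac{1}{T}\sum_{t=1}^T \bigl( h(X_t) - h(X_t^*) \bigr).
\end{equation*}
Property (i) combined with the moment bound $\E \norm{X_t^*}^s < \infty$ from Theorem~\ref{thm:StationaryPoARX} gives $\E \abs{h(X_t^*)} < \infty$, so the ergodic theorem identifies the first term's limit as $\E(h(X_t^*))$. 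Everything therefore reduces to showing that the second term vanishes in probability, regardless of the (fixed) initialisation of $X_t$.

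Next I would establish a geometric coupling bound. Iterating $\E \norm{F(x;\xi_t) - F(\tilde x;\xi_t)}^s \leq \kappa \norm{x - \tilde x}^s$ conditionally on $\infoF[t-1]$ yields
\begin{equation*}
\E \norm{X_t - X_t^*}^s \leq \kappa^t \, \E \norm{X_0 - X_0^*}^s,
\end{equation*}
and the right-hand side is finite since $X_0$ is fixed and $\E \norm{X_0^*}^s < \infty$. To translate this into control of $h(X_t) - h(X_t^*)$ I would split on the event $\{\norm{X_t - X_t^*} < c\}$ from condition (ii). On the close event, the local Lipschitz bound gives $\abs{h(X_t) - h(X_t^*)} \leq L_c \norm{X_t - X_t^*}$, whose expectation is at most $L_c \kappa^{t/s}(\E \norm{X_0 - X_0^*}^s)^{1/s}$ by Jensen, so the Cesàro average vanishes. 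On the far event, use the polynomial growth (i) to bound $\abs{h(X_t) - h(X_t^*)}^{1+\delta}$ by a constant times $1 + \norm{X_t}^s + \norm{X_t^*}^s$, control $\Pr(\norm{X_t - X_t^*} \geq c) \leq c^{-s}\kappa^t \E \norm{X_0 - X_0^*}^s$ by Markov's inequality, and apply Hölder with conjugate exponents $(1+\delta,(1+\delta)/\delta)$ to obtain an expectation of order $\kappa^{t\delta/(s(1+\delta))}$. Summing and dividing by $T$ gives $\frac{1}{T}\sum_{t=1}^T \abs{h(X_t) - h(X_t^*)} \to 0$ in $L^1$, hence in probability, closing the argument.

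The hard part is the tail estimate on the far event: property (i) is strictly weaker than a global Lipschitz bound, so one must carefully balance the $(1+\delta)$-moment of $h$, the $s$-moments of $X_t$ and $X_t^*$, and the geometric coupling rate $\kappa^t$ so that the Hölder split produces a Cesàro-null sequence. The remaining ingredient, a uniform-in-$t$ bound on $\E \norm{X_t}^s$ under an arbitrary fixed start, follows by a standard induction using Assumptions~2 and~3 (essentially the moment argument underlying Theorem~\ref{thm:StationaryPoARX}), and is routine once the coupling decomposition is in place.
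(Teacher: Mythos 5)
Your proposal is correct. Note that the paper does not actually prove this lemma: it simply defers to \citet{KristensensRahbek2015} (sic, Kristensen and Rahbek) and \citet{LindnerSzimayer2005}, so there is no in-paper argument to compare against line by line. What you have written is essentially a self-contained reconstruction of the argument underlying those references: couple the initialised chain to the stationary solution via common innovations, get geometric decay of $\E \norm{X_t - X_t^*}^s$ from the contraction hypothesis, apply Birkhoff's ergodic theorem to $h(X_t^*)$ (integrability following from (i) and $\E\norm{X_t^*}^s < \infty$), and kill the discrepancy term by splitting on $\{\norm{X_t - X_t^*} < c\}$, using the local Lipschitz bound (ii) on the near event and the growth bound (i) with Markov plus H\"older on the far event. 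The two points worth making explicit if you write this out in full are (a) that the stationary version $X_t^*$ can indeed be realised on the same probability space and driven by the same $\xi_t$ (this is exactly what the backward-iteration construction in the Doukhan--Wintenberger framework invoked for Theorem~\ref{thm:StationaryPoARX} provides), and (b) the uniform-in-$t$ bound $\sup_t \E\norm{X_t}^s < \infty$, which in fact needs no separate induction: it follows immediately from $\norm{X_t} \le \norm{X_t - X_t^*} + \norm{X_t^*}$ together with the coupling bound and stationarity. Your exponent $\kappa^{t\delta/(s(1+\delta))}$ on the far event differs slightly from the $\kappa^{t\delta/(1+\delta)}$ one gets by applying Markov at the $s$-th moment, but either rate is geometric, so the Ces\`aro average still vanishes and the conclusion stands.
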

\begin{proof}
See \citet{KristensenRahbek2015}, or apply the main result from 
\citet{LindnerSzimayer2005}.
\end{proof} 

\section{Estimation}
\label{S:Estimation}

Here we describe how the PoARX model can be estimated. We also provide asymptotic 
results for the estimated parameters.

We consider the model specified by Equations~\eqref{E:PoARX}, 
where we denote the unknown parameters by $\vartheta$. Then with 
$\alpha^j = \left( \alpha^j_1, \dots, \alpha^j_p \right)^\top$, 
$\beta^j = \left( \beta^j_1, \dots, \beta^j_q \right)^\top$, and 
$\eta^j = \left( \eta^j_1, \dots, \eta^j_r \right)^\top$,
\begin{equation*}
\begin{aligned}
 \vartheta &= \left( 
  \omega^1, (\alpha^1)^\top, (\beta^1)^\top, (\eta^1)^\top, 
  \dots, 
  \omega^\Kdim, (\alpha^\Kdim)^\top, 
   (\beta^\Kdim)^\top, (\eta^\Kdim)^\top, 
  \rho \right)^\top, \\
  &= \left( (\theta^1)^\top, \dots,  (\theta^\Kdim)^\top, \rho \right)
   ,
\end{aligned}
\end{equation*}
where $\theta^j \in \Theta^j \subset [0, \infty)^{1+p+q+r}$.

The probability mass function of the copula PoARX model, derived from the cumulative 
mass function as rectangle probabilities (compare to 
Equation~\eqref{E:discreteCopulapmf}), is 
\begin{multline*}
  \Pr(Y_t^1 = y_t^1, \dots, Y_t^\Kdim = y_t^\Kdim) \\
  = \sum_{l_1=0}^1 \cdots \sum_{l_\Kdim=0}^1
      (-1)^{l_1 + \dots + l_\Kdim} 
      C_\rho \left( F_1(y_t^1 - l_1; \lambda_t^1),
                    \dots, 
                    F_\Kdim(y_t^\Kdim - l_\Kdim, \lambda_t^\Kdim) 
                  \right),
\end{multline*}
with $C_\rho(\cdot)$ representing Frank's copula and 
\begin{equation*}
F_j(x; \mu) = \sum_{k=0}^{x} 
 \e{-\mu} \frac{\mu^{k}}{k!},
\qquad j = 1, \dots, \Kdim.
\end{equation*}
The conditional log-likelihood for $\vartheta$ given the multivariate observations 
$y_1, \dots, y_n$ with initial values $y_0$ and $\lambda_0$ 
(denoted by the $\sigma$-field $\infoF[0]$) is given by the following.
\begin{equation*}
\begin{aligned}
l(\vartheta) &= 
  \sum_{t=1}^n \log \left(
   \Pr((y_t^1, \dots y_t^\Kdim)^\top 
    \given\infoF[t-1]; \vartheta) \right) \\
  &= \sum_{t=1}^n l_t (\vartheta).
\end{aligned}
\end{equation*}
The maximum likelihood estimator (MLE) is
\begin{equation*}
\hat{\vartheta} = \argmax_{\vartheta \in \Theta} l(\vartheta).
\end{equation*}
However, with the large dimension of $\vartheta$ it is computationally more feasible to 
use a two-stage procedure known as the method of inference functions (IFM), 
developed by \citet{Joe2005}. The idea of IFM is to estimate the marginal parameters 
separately from the dependence parameter, hence reducing the dimension of the 
unknown parameters in each maximisation process. To perform this we need
the marginal log-likelihoods. When we consider the observations $y_1^j, \dots, y_n^j$ 
for each $j = 1, \dots, \Kdim$ separately, the marginal log-likelihood for $\theta^j$ can 
be written as
\begin{equation}
\label{E:MarginalLikelihood}
\begin{aligned}
l_j(\theta^j) &= 
  \sum_{t=1}^n \log \left(
   \Pr(y_t^j \given \infoF[t-1]; \theta^j) \right) \\
  &= -\lambda_t^j + y_t^j \log (\lambda_t^j) - \log (y_t^j!)
  ,
\end{aligned}
\end{equation}
with $\lambda_t^j$ calculated using Equation~\eqref{E:PoARXc}.

The IFM method is more explicitly stated as follows,
\begin{enumerate}
\item[(a)] the log-likelihoods $l_j(\cdot)$ of the $\Kdim$ univariate margins are 
  independently maximised to produce estimates 
  $\tilde{\theta}^1, \dots, \tilde{\theta}^\Kdim$ ;
\item[(b)] the function $l(\tilde{\theta}^1, \dots, \tilde{\theta}^\Kdim, \rho)$ 
  is maximised over $\rho$ to obtain $\tilde{\rho}$.
\end{enumerate}

Before we state the main result of this section we make a reference to the large sample 
properties of univariate PoARX obtained by \citet{AgostoEtAl2016}. In order to analyse 
these properties, conditions were imposed on the parameters and the exogenous 
covariates. 

\paragraph{Assumption 4} 
The space of possible parameters for each marginal distribution $j$, $\Theta^j$, 
is compact for all $j = 1,\dots, \Kdim$. This means that for all 
$\theta^j = ( \omega^j, \alpha^j, \beta^j, \eta^j ) \in \Theta^j$, 
$\beta^j_i \leq \beta_i^{j,U}$, for each $i = 1, \dots, q$, and 
$\omega^j \geq \omega^j_L$ for some constants $\omega^j_L > 0$ and 
$\beta_i^{j,U} > 0$ with $\sum_{i=1}^q \beta_i^{j,U} < 1$.
\paragraph{Assumption 5} 
The polynomials $A^j(z) := \sum_{i=1}^p \alpha^j_{0,i} z^i$ and 
$B^j(z) := 1 - \sum_{i=1}^q \beta^j_{0,i} z^i$ have no common roots; and for any 
$a \neq 0$ and $g \neq 0$, $\sum_{i=1}^p a_i Y^{*j}_{t-i} + \sum_{i=1}^r g_i x^{*j}_{i,t}$ 
has a non-degenerate distribution. This should be true for each $j = 1, \dots, \Kdim$. \\

Using Assumptions 1 -- 5 we can obtain consistency of the maximum likelihood 
estimators of the parameters for the $j^\text{th}$ univariate PoARX component
based on Equation~\eqref{E:MarginalLikelihood}. Equivalently, we can state that the 
IFM estimator (from part~(a) of the IFM procedure) of the multivariate PoARX model is 
consistent. Furthermore, if $\theta^j \in \interior \Theta^j$, then
\begin{equation*}
\sqrt{n} (\tilde{\theta}^j - \theta^j_0) \overset{d}{\to} 
 \mathcal{N} \left( 0, H^{-1}(\theta^j_0) \right), \qquad
 H(\theta^j) := - \E \left( 
 \frac{\partial^2 l_j^*(\theta^j)}{\partial \theta^j \partial (\theta^j)^\top} 
 \right),
\end{equation*}
where $l_j^* (\theta^j)$ denotes the marginal likelihood function evaluated at the stationary 
solution. The proof is equivalent to the proof of Theorem 2 in \citet{AgostoEtAl2016}.

Lastly, from the theory of inference functions \citep{Godambe1991, Joe2005}, we can 
deduce an asymptotic result for the IFM estimate of $\rho$,
\begin{equation*}
\sqrt{n} (\tilde{\rho} - \rho_0) \overset{d}{\to} 
 \mathcal{N} \left( 0, H^{-1}(\rho_0) \right), \qquad
 H(\rho) := - \E \left( 
 \frac{\partial^2 l^*}{\partial \rho \partial \rho^\top} 
 (\tilde{\theta}^1, \dots, \tilde{\theta}^\Kdim, \rho)
 \right).
\end{equation*}

We can now state our result about the asymptotic behaviour of the IMF
estimator of $\vartheta$, the full vector of parameters.
\begin{theorem}
\label{thm:CLTPoARX}
Suppose that Assumptions 1 -- 5 hold with $s \geq 2$ and the true value of
$\vartheta$ is denoted by $\vartheta_0$. Then $\vartheta$ is consistent and if 
$\vartheta \in \interior \Theta$,
\begin{equation}
\sqrt{n} (\tilde{\vartheta} - \vartheta_0) \overset{d}{\to} 
\mathcal{N} \left( 0, V \right),
\end{equation}
where details of asymptotic covariance matrix $V$ can be found in the 
proof.
\end{theorem}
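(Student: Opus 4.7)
The plan is to view the IFM estimator $\tilde{\vartheta}$ as the solution to a single stacked system of estimating equations and then apply standard M-estimator arguments, where the ergodic theorem comes from Lemma~\ref{L:PoARXLLN} and the central limit theorem comes from the martingale structure of the score evaluated at the truth.

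First, I would assemble consistency. The paper has already noted that each $\tilde{\theta}^j$ is consistent by the univariate result of \citet{AgostoEtAl2016} and that $\tilde{\rho}$ is consistent from the theory of inference functions. Stacking these gives $\tilde{\vartheta} \overset{P}{\to} \vartheta_0$. Next, I would define the per-observation stacked score
\begin{equation*}
  \psi_t(\vartheta)
  = \left(
      \frac{\partial l_{1,t}}{\partial \theta^1}, \ldots,
      \frac{\partial l_{K,t}}{\partial \theta^K},
      \frac{\partial l_t}{\partial \rho}
    \right)^\top,
\end{equation*}
so that $\tilde{\vartheta}$ satisfies $\sum_{t=1}^n \psi_t(\tilde{\vartheta}) = 0$ by construction of the IFM procedure. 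A mean-value expansion around $\vartheta_0$ yields
\begin{equation*}
  \sqrt{n}\,(\tilde{\vartheta} - \vartheta_0)
  = -\left[ \frac{1}{n}\sum_{t=1}^n \frac{\partial \psi_t(\vartheta^*)}{\partial \vartheta^\top} \right]^{-1}
    \frac{1}{\sqrt{n}}\sum_{t=1}^n \psi_t(\vartheta_0),
\end{equation*}
for some $\vartheta^*$ between $\tilde{\vartheta}$ and $\vartheta_0$.

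The first factor I would handle via Lemma~\ref{L:PoARXLLN} applied coordinate-wise to the entries of $\partial \psi_t / \partial \vartheta^\top$: by stationarity, ergodicity and continuity, the average converges in probability to $-D$, where $D = -\E[\partial \psi_t(\vartheta_0)/\partial \vartheta^\top]$. Because $l_{j,t}$ depends only on $\theta^j$, the matrix $D$ is block lower triangular, with the marginal Hessians $H(\theta_0^j)$ on the diagonal and the scalar $H(\rho_0)$ in the bottom-right block; this structure is precisely why the IFM method is consistent and why the limit is invertible. For the second factor, observe that at $\vartheta_0$ each marginal score $\partial l_{j,t}/\partial \theta^j$ and the copula score $\partial l_t/\partial \rho$ have conditional mean zero given $\infoF[t-1]$, so $\{\psi_t(\vartheta_0)\}$ is a stationary, ergodic, martingale-difference sequence. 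Under Assumption~2 with $s\geq 2$, I would invoke a multivariate martingale CLT (e.g. Brown's theorem) to obtain
\begin{equation*}
  \frac{1}{\sqrt{n}}\sum_{t=1}^n \psi_t(\vartheta_0)
  \overset{d}{\to} \mathcal{N}(0, J), \qquad
  J = \E\bigl[\psi_t(\vartheta_0)\psi_t(\vartheta_0)^\top\bigr].
\end{equation*}
Combining the two factors via Slutsky's theorem gives the sandwich covariance $V = D^{-1} J (D^{-1})^\top$ (the Godambe form), which is what the statement refers to.

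The main obstacle is the verification step: checking that the Lipschitz and growth conditions required by Lemma~\ref{L:PoARXLLN} hold for the first- and second-order derivatives of $l_t$ with respect to $\vartheta$, and that the moment conditions (requiring $s\geq 2$ rather than just $s\geq 1$) transfer to second moments of $\psi_t$ and uniform integrability of the Hessian on a neighbourhood of $\vartheta_0$. For the marginal blocks these bounds inherit from the univariate analysis in \citet{AgostoEtAl2016}, but for the $\rho$-block one has to bound derivatives of the discrete Frank copula density in Equation~\eqref{E:discreteCopulapmf}, whose generator (Equations~\eqref{E:FCGenerator}--\eqref{E:FCGeneratorInverse}) is smooth in $\rho$ away from zero and whose derivatives in $\lambda^j$ can be controlled by the tail of the Poisson mass function together with the moment bound $\E\|X_t^*\|^s<\infty$ from Theorem~\ref{thm:StationaryPoARX}. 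Once these regularity conditions are established, everything else is routine application of the M-estimator template.
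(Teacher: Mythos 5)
Your proposal is correct and reaches the same sandwich covariance $V = D^{-1} J (D^{-1})^\top$ that the paper writes as $(-D_g^{-1}) M_g (-D_g^{-1})^\top$, but it gets there by a more self-contained route. The paper's proof delegates the asymptotic machinery: it cites \citet{AgostoEtAl2016} for the marginal blocks and the theory of inference functions \citep{Godambe1991, Joe2005} for the joint result, and then spends its effort writing down the partitioned structure of $-D_g$ (block lower triangular with the marginal Hessians $\mathcal{I}_{jj}$ on the diagonal) and of $M_g$. You instead unpack what that cited theory actually does --- stacking the estimating equations, a mean-value expansion, Lemma~\ref{L:PoARXLLN} for the Hessian average, a martingale-difference CLT for the score at $\vartheta_0$, and Slutsky --- which is the honest content behind the Godambe formula and makes explicit where $s \geq 2$ and the regularity conditions enter; the paper leaves all of that implicit in the citations. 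Two things the paper records that you omit: it first treats the independently coupled case separately (where $V$ is block diagonal in the $H_j^{-1}(\theta_0^j)$), and, more substantively, it notes that the off-diagonal blocks $\Cov(g_j, g_d)$ between the marginal scores and the copula score vanish (referring to the Appendix of \citet{Joe2005}), so that $M_g$ is block diagonal in the $(\theta,\rho)$ partition --- your $J = \E\bigl[\psi_t(\vartheta_0)\psi_t(\vartheta_0)^\top\bigr]$ is correct but leaves this simplifying structure unexploited. Conversely, your closing paragraph on verifying the Lipschitz/growth conditions for the derivatives of the Frank copula density is a genuine addition: the paper asserts the regularity conditions ``hold for the dependence parameter'' without checking them.
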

\begin{proof}
See \ref{A:ProofCLT}.
\end{proof}

\section{Forecasting}
\label{S:Forecasting}

Forecasting with PoARX models is to some extent similar to the forecasting of GARCH-X
processes \citep{HansenEtAl2012}.  Predictions for the intensities can be obtained
recursively using Equation~\eqref{E:PoARXc} and the property
$\E (Y^j_t \given \infoF[t-1]) = \lambda^j_t$. This procedure also gives point
predictions for the process. However, there is substantial difference when predictive
distributions are required.

One-step ahead forecasts at time~$t$ of the intensities
$\lambda^j_{t+1}, \dots, \lambda^j_{t+h-1}$, given information~$\infoF[t]$, parameters
$\theta^j$, and covariates $x_t$ are:
\begin{equation}
\lambda^j_{t+1 \given t} = \omega^j + 
 \sum_{l=1}^p \alpha^j_l y^j_{t+1-l} + 
 \sum_{l=1}^q \beta^j_l \lambda^j_{t+1-l} + 
 \eta^j \cdot x^j_{t}, \qquad
 j = 1, \dots, K.
\end{equation}
By the specifications of the model, the one-step ahead marginal predictive
distributions are Poisson with predicted intensities computed above, i.e. for
each $j=1,\dots,K$, 
\begin{equation*}
P(Y^j_{t+1} = y \given \infoF[t]) = 
  \frac{\lambda^y  \exp (- \lambda ) }{y!}.
\end{equation*}
where $\lambda = \lambda^j_{t+1 \given t}$.
The joint predictive distribution is obtained by substituting the predicted
intensities in Equation~\eqref{E:multivariateCDF}.

For multi-step-ahead forecasts, the procedure is not so straightforward. Firstly, 
the computation of the $h$-step-ahead forecast at time~$t$ assumes that the
exogenous covariates $x_{t}, \dots, x_{t+h-1}$ are known. In practice, these 
will often need to be replaced by their own forecasts or projections. This is not a 
problem when the covariates are leading indicators, see the example in Section
\ref{S:Applications}. With a slight abuse of notation we use 
$\lambda^j_{t+h \given t}$ to represent the ``intensity for horizon $h$ conditional 
on $\infoF[t]$ and $x_{t}, \dots, x_{t+h-1}$''. We let this knowledge be denoted by 
the $\sigma$-field $\infoG[t]$. \citet{AgostoEtAl2016} assume that the predictive 
distribution for any horizon $h$ follows a Poisson distribution, 
$Y^j_{t+h \given t} \sim \text{Poisson} ( \lambda^j_{t+h \given t} ),$
and use it to obtain prediction intervals. However, we show below that the predictive 
distributions for $h \ge 2$ are not necessarily Poisson. Rather than compute 
the probabilities directly, we use an approach similar to \citet{Boshnakov2009} who 
derived predictive distributions (for a different class of models) using conditional 
characteristic functions. Since the Poisson distribution is discrete, it is more convenient 
to use probability generating functions. 

The probability generating functions can be calculated as follows, starting with
$h=2$. For a time series $Y_{t}$ following a PoARX process with intensity 
$\lambda_t$, we can write $\lambda_{t+2 \given t} = c_{t+2} + \alpha_{1}y_{t+1}$, 
where $c_{t+2}$ is measurable w.r.t. $\infoG[t]$.
In the derivation below we will need the following result:
\begin{align}
  \E(\exp \left( (-1 + z)\alpha_{1}y_{t+1} \right) \given \infoG[t] )
  &= \sum_{k=0}^{\infty} \frac{\lambda_{t+1}^{k}}{k!} 
     \exp \left(-\lambda_{t+1} \right)
     \exp \left( (-1+z)\alpha_{1}k \right) \nonumber
  \\ &=  \exp\left( -\lambda_{t+1} \right) \sum_{k=0}^{\infty}
                 \frac{(\lambda_{t+1}\e{(-1+z)\alpha_{1}})^{k}}{k!} \nonumber
  \\ &= \exp \left( -\lambda_{t+1} \right) \exp \left(
       \lambda_{t+1}\e{(-1+z)\alpha_{1}} \right) \nonumber
  \\ &= \exp \left( \lambda_{t+1}(-1 + \e{(-1+z)\alpha_{1}}) \right)
       \label{eq:aux1}
       .
\end{align}
The 2-step ahead forecast has the following generating function ($\pgfP[2]{z}$
depends also on $t$ but we omit that to keep the notation transparent):
\begin{align*}
  \pgfP[2]{z}
     &= \E(z^{Y_{t+2}} \given \infoG[t])
  \\ &= \E( \E(z^{Y_{t+2}} \given \infoG[t+1]) \given \infoG[t])
  \\ &= \E( \exp \left( (-1 + z)\lambda_{t+2} \right) \given \infoG[t] )
  \\ &= \exp \left((-1 + z)c_{t+2} \right) 
     \E(\exp \left( (-1 + z)\alpha_{1}y_{t+1} \right) \given \infoG[t] )
  \\ &=
       \begin{cases}
       \exp \left((-1 + z)c_{t+2} \right) & \text{if $\alpha_{1} = 0$,} \\
       \exp \left((-1 + z)c_{t+2} \right) \exp \left( \lambda_{t+1}(-1 + \exp{(-1+z)\alpha_{1}}) \right)
          & \text{if $\alpha_{1} \neq 0$, by \eqref{eq:aux1}.}
       \end{cases}
\end{align*}
We can see that if $\alpha_{1}\neq 0$, then $\pgfP[2]{z}$ is not Poisson, by the uniqueness 
property of generating functions. The joint distribution can be obtained by computing 
analogously the joint probability generating functions.

For $h>2$ the above calculation can be extended by repeatedly using the property of the
iterated conditional expectation. It can also be expressed recursively as follows:
\begin{align*}
  \pgfP[h]{z}
     &= \E(z^{Y_{t+h}} \given \infoG[t])
  \\ &= \E( \E(z^{Y_{t+h}} \given \infoG[t+1]) \given \infoG[t])
  \\ &= \E( \pgfP[h-1]{z} \given \infoG[t])
\end{align*}
Clearly, for $h \ge 2$ the forecast distribution is not necessarily Poisson. 
Nevertheless, we have that
\begin{lemma}
$\E( Y_{t+h} \given \infoG[t]) = \E( \lambda_{t+h} \given \infoG[t] ) = 
 : \lambda_{t+h \given t}$
\end{lemma}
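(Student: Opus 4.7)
The plan is to invoke the tower property of conditional expectation using the nested structure of the two $\sigma$-fields at play. The crucial observation is that the conditional Poisson specification in Equation~\eqref{E:PoARXb} immediately yields $\E(Y_{t+h} \given \infoF[t+h-1]) = \lambda_{t+h}$, so after that the work reduces to bookkeeping with $\sigma$-fields.

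First I would verify the inclusion $\infoG[t] \subseteq \infoF[t+h-1]$. This is merely a matter of unpacking the definitions: $\infoG[t]$ augments $\infoF[t]$ only with the future covariates $x_{t}, \dots, x_{t+h-1}$, whereas $\infoF[t+h-1]$ already contains all observations, intensities, and covariates through time $t+h-1$. Second, applying iterated conditional expectation gives
\[
\E(Y_{t+h} \given \infoG[t])
= \E\bigl( \E(Y_{t+h} \given \infoF[t+h-1]) \given \infoG[t] \bigr)
= \E(\lambda_{t+h} \given \infoG[t]),
\]
which is $\lambda_{t+h \given t}$ by definition; this establishes both equalities in the statement.

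The only technical concern---hardly an obstacle---is integrability. Measurability of $\lambda_{t+h}$ with respect to $\infoF[t+h-1]$ follows from the recursion~\eqref{E:PoARXc}, and finiteness of the relevant conditional expectations can be ensured either by iteratively unfolding~\eqref{E:PoARXc} under the non-negative coefficient assumption or by appealing to the moment bound $\E(\norm{X_t^*}^s) < \infty$ supplied by Theorem~\ref{thm:StationaryPoARX}. Morally, the lemma is the Poisson analogue of the identity $\E(Y_t \given \infoF[t-1]) = \lambda_t$, transported from one step to $h$ steps by a single use of the tower property against the intermediate $\sigma$-field $\infoF[t+h-1]$.
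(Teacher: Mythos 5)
Your proof is correct and follows essentially the same route as the paper: a single application of the tower property against an intermediate $\sigma$-field at time $t+h-1$, using the conditional Poisson specification to replace $Y_{t+h}$ by $\lambda_{t+h}$. The only cosmetic difference is that you condition on $\infoF[t+h-1]$ where the paper uses $\infoG[t+h-1]$; both work, and your added remarks on measurability and integrability are sound.
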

\begin{proof}
For $h=1$, the claim follows from the specification of the model. For $h>1$ we
can use Equation~\eqref{E:PoARXc} and iterated conditional expectations to find that
\begin{align*}
  \E( Y_{t+h} \given \infoG[t])
     &= \E( \E(Y_{t+h} \given \infoG[t+h-1]) \given \infoG[t])
  \\ &= \E( \lambda_{t+h} \given \infoG[t])
  . \qedhere
\end{align*}
\end{proof}
Therefore, we can generate $h$-step ahead forecast of the intensity with the 
following equation,
\begin{equation}
\lambda_{t+h \given t} = \omega + 
 \sum_{l=1}^p \alpha_l Y_{t+h-l \given t} + 
 \sum_{l=1}^q \beta_l \lambda_{t+h-l \given t} + 
 \eta \cdot x_{t+h-1}.
\end{equation}
where
\begin{equation*}
Y_{t+k \given t} = 
  \left\{
	\begin{array}{ll}
		\lambda_{t+k \given t} & \mbox{if } k > 0, \\
		y_{t+k} & \mbox{if } k \le 0.
	\end{array}
 \right.
\end{equation*}
Prediction intervals can be obtained by computing the probabilities from the probability
generating functions discussed above. Since these are probably feasible only for small
horizons, simulation would be a more practical alternative. To obtain a prediction interval
for $Y_{t+h}^j$, simulate a trajectory of the PoARX time series until time $t+h$,
resulting in one simulated value $Y_{t+h}^j$. 
Repeating this process $B$ times allows access to the quantiles from which we can 
obtain a prediction interval for the time series. Simulating a joint predictive region is an 
area for further work and not discussed here.

\section{Applications}
\label{S:Applications}

We illustrate the use of PoARX models with a data set from
\citet{IhlerEtAl2006}, who used it in their work on event detection. The
computations were done with R \citep{R} using the implementation of the PoARX
models in package PoARX \citep{RPoARX}.

\subsection{Data}

The data contains counts of the estimated 
number of people that entered and exited a building over thirty-minute intervals of a UCI 
campus building. Counts were recorded by an optical sensor at the front door starting 
from the end of 23/07/2005 until the end of 05/11/2005. The data has periodic tendencies 
but is also influenced by events within the building causing an influx of traffic. Originally, 
the data was used to build a novel event detection framework under a Bayesian scheme. 
The counts of people going into (N$^{\text{I}}$(t)) and out of (N$^{\text{O}}$(t)) the 
building were both assumed to follow Poisson distributions and were used in a model 
to detect the occurrence of an event. Three weeks worth of the data in question is shown 
in Figure \ref{fig:Flows}. In total, there are 5040 observations, which corresponds to 
15 weeks of data.
\begin{figure}[ht]
\centering
\caption{Three weeks of counts for people entering and 
exiting a UCI campus building.}
\begin{subfigure}[b]{0.85\textwidth}
   \includegraphics[width=1\linewidth]{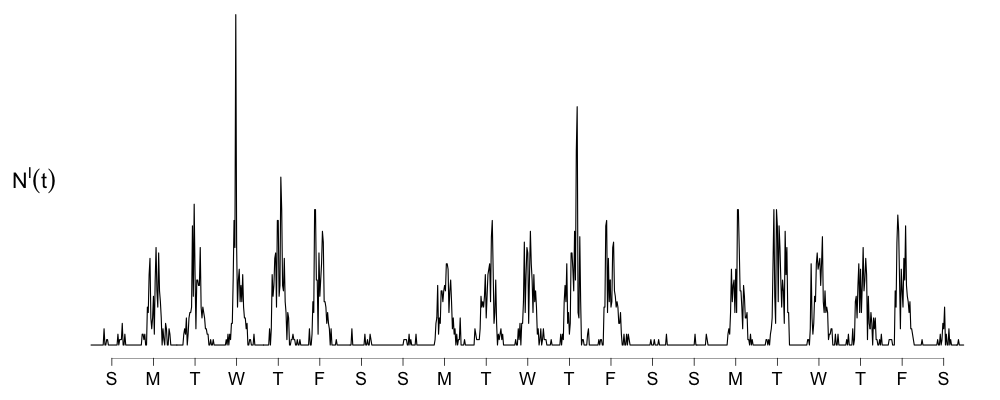}
   \caption{Entry data}
   \label{fig:FlowIn} 
\end{subfigure}
\begin{subfigure}[b]{0.85\textwidth}
   \includegraphics[width=1\linewidth]{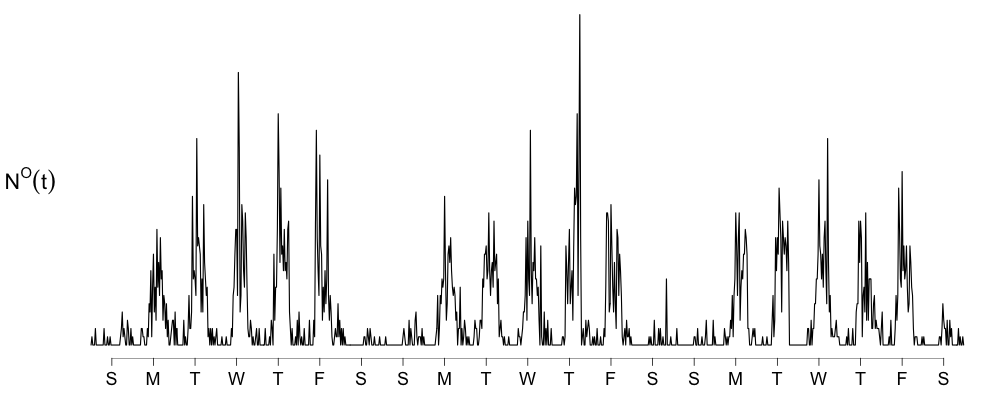}
   \caption{Exit data}
   \label{fig:FlowOut}
\end{subfigure}
\label{fig:Flows}
\end{figure}

In this application, we will estimate the number of people entering and exiting the building
using the Poisson distribution in the spirit of \citet{IhlerEtAl2006}. The basis of model 
predictions will be the lagged values of the observations and mean value, as well as some 
exogenous covariates. These covariates are all indicator variables, representing the 
following. The first is a ``weekday'' indicator, that takes value 1 when the day is 
Monday -- Friday. This corresponds to an uplift for working days. The second indicator is 
a ``daytime'' indicator, taking value 1 when the time is between 07:30 and 19:30, 
representing an uplift in the traffic during working hours. The third indicator is associated 
with the presence of an event occurring. For the flow count into the building, the variable 
takes the value 1 when an event will occur in the next hour. For the flow out of the 
building, the variable takes the value 1 in the hour after an event finished. These 
represent the arrival and departure of people coming to the building for the event. We 
will investigate whether the use of Frank's copula, hence the capturing of any positive 
or negative dependence, improves the prediction of the number of people entering and 
exiting the building.

\subsection{Estimation and in-sample model evaluation}

We fit four types of models to the data in an attempt to find the best predictive model. 
We first fit a model with no covariates - it uses only the time series aspects to predict 
upcoming counts. Model 1 uses this approach and treats the two counts independently, 
whereas model 2 fits the joint distribution of the flows using Frank's copula. We then add 
covariates to the models, seeking to improve the predictive accuracy of the two models. 
As mentioned, there are three covariates available for each time series. Model 3 uses the 
covariates along with the assumption of independence, whilst Model 4 uses Frank's 
copula with the covariates.

To assess the quality of our models, we used 5-fold cross validation \citep{Stone1974} on 
a training set to produce a cross-validated log score \citep{Bickel2007}. This was also 
the performance metric used to select the lagged values of the observations and means. 
Since we are modelling time series, we cannot leave out a fold that occurs in the middle 
of the data (thus disrupting the time series). Hence we choose overlapping folds, 
aggregating the log scores of predictions for each observation. Using the first 4000 
observations of the building data as a training set, we use 2000 observations in each 
fold of the cross-validation. The observations not used to estimate the model are used
for evaluation. The log score is calculated as follows. Let $r = (r_1, \dots, r_n)$ be a 
vector of probabilities for $i = 1, \dots, n$ observed events. Then the log score is
\begin{equation*}
L(r) = \sum_{i=1}^n \log (r_i).
\end{equation*}

For analysis, the lagged values chosen differed slightly for each time series. For the 
number of people entering the building (N$^{\text{I}}$(t)), we chose to use 4 lagged 
values for the observations (lags 1, 2, 48, 336) and 1 lagged value for the means (lag 1). 
Lagged values from the previous 2 observations represent the flow of people within the 
last hour, whilst the lag of 48 corresponds to the same time point on the previous day, 
and 336 to the same time point on the same day in the previous week. For the number 
of people exiting the building (N$^{\text{O}}$(t)) we used the same 4 lagged values for 
the observations (lags 1, 2, 48, 336) but included an extra lag for the mean values 
(lags 1, 48). These were chosen based on the cross-validated log scores. In 
Table~\ref{T:FittedModels} we present the values of the coefficients of the fitted models, 
where lags are sorted in increasing size (in other words $\alpha_3$ corresponds to the 
observations lag 48). The standard errors of parameters in Models~1 and 3 are of the 
order $10^{-4}$, and in Models~2 and 4 are of the order $10^{-5}$ or $10^{-6}$. 
This means that $\beta_{2}^{\text{O}}$ is not statistically significant in every model 
except Model~2, but when a new model is fitted without this variable we find that the 
strength of the predictions decreases. For this reason, we choose to keep the 
$48^\text{th}$ lagged mean in our models.

\begin{table}[ht]
\caption{Fitted models}
\centering
\begin{tabular}{| c | c c c c |}
\hline
Coefficient $\backslash$ Model & 1 & 2 & 3 & 4 \\[0.5ex] 
\hline
$\omega^{\text{I}}$ & 0.079 & 0.079 & 0.019 & 0.019 \\[5pt]
$\alpha_1^{\text{I}}$ & 0.390 & 0.390 & 0.396 & 0.396 \\[3pt]
$\alpha_{2}^{\text{I}}$ & 0.137 & 0.137 & 0.113 & 0.113 \\[3pt]
$\alpha_{3}^{\text{I}}$ & 0.054 & 0.054 & 0.048 & 0.048 \\[3pt]
$\alpha_{4}^{\text{I}}$ & 0.275 & 0.275 & 0.256 & 0.256 \\[5pt]
$\beta_{1}^{\text{I}}$ & 0.142  & 0.142 & 0.140 & 0.140 \\[5pt]
$\eta_{1}^{\text{I}}$ & - & - & 0.102 & 0.102 \\[3pt]
$\eta_{2}^{\text{I}}$ & - & - & 0.229 & 0.229 \\[3pt]
$\eta_{3}^{\text{I}}$ & - & - & 5.684 & 5.684 \\[5pt]
$\omega^{\text{O}}$ & 0.129 & 0.129 & 0.035 & 0.035 \\[5pt]
$\alpha_1^{\text{O}}$ & 0.347 & 0.347 & 0.342 & 0.342 \\[3pt]
$\alpha_{2}^{\text{O}}$ & 0.163 & 0.163 & 0.153 & 0.152 \\[3pt]
$\alpha_{3}^{\text{O}}$ & 0.049 &  0.049 & 0.045 & 0.045 \\[3pt]
$\alpha_{4}^{\text{O}}$ & 0.264 & 0.264 & 0.255 & 0.255 \\[5pt]
$\beta_{1}^{\text{O}}$ & 0.161 & 0.161 & 0.136 & 0.136 \\[3pt]
$\beta_{2}^{\text{O}}$ & 2.05e-04 & 2.05e-04 & 9.24e-10 & 9.24e-10 \\[5pt]
$\eta_{1}^{\text{O}}$ & - & - & 0.153 & 0.153 \\[3pt]
$\eta_{2}^{\text{O}}$ & - & - & 0.299 & 0.299 \\[3pt]
$\eta_{3}^{\text{O}}$ & - & - & 2.500 & 2.500 \\[5pt]
$\rho$ & - & 2.545 & - & 2.642 \\
\hline
\end{tabular}
\label{T:FittedModels}
\end{table}

In Table \ref{T:CVTrainScores} we present the cross-validated log score, AIC 
\citep{Akaike1974}, and BIC \citep{Schwarz1978} of the four models. Looking firstly at 
the information criteria, they both suggest that the best model is Model~4, which 
includes covariates and dependence.  Further, it seems that adding the covariates to 
the model improved the strength of both the model fitted with an independence 
assumption (Model~2 vs. Model~1) and the model using Frank's copula 
(Model~4 vs. Model~3). It also appears that the models using Frank's copula 
(Models 2 and~4) are better fits to the data than the independent case 
(Models 1 and~3, respectively).

However, we are interested in predictive accuracy, so we look mainly at the log scores. 
Firstly we notice that Model~2 appears to be the best model, while Model~1 is second. 
It seems as though the addition of the covariates weakens the fit of the model, despite 
the parameters of the relevant models being significantly greater than zero, statistically 
speaking. Furthermore, using this metric, we deduce that the use of Frank's copula 
improves the predictions compared to those using the independence assumption. The smallest
score and therefore the worst performance is found in the results from Model~3. This 
model contains covariates along with the independence assumption. However, since 
the two counts share common covariates, the assumption of independence is violated 
and we would speculate that this is the reason for the extreme score.

\begin{table}[ht]
\caption{Model training scores from cross-validated fit on 4000 observations}
\centering
\begin{tabular}{| c c c c |}
\hline
Model number & Log score & AIC & BIC \\[0.5ex] 
\hline 
1 & -15444 & 30252 & 30334 \\
2 & -15411 & 29802 & 29891 \\
3 & -25088 & 29800 & 29920 \\
4 & -16856 & 29269 & 29395 \\
\hline 
\end{tabular}
\label{T:CVTrainScores}
\end{table}

\subsection{Prediction and out-of-sample model evaluation}

As we are interested in the predictive strength of our model, it is a good idea to 
assess how the model performs predicting observations not in the original sample. Since
we only used the first 4000 observations in training, we can use the remaining 1040 
observations as a test set. Again using the log score to evaluate the performance, we 
display the results in Table \ref{T:TestScores}. 

\begin{table}[ht]
\caption{Model testing scores based on the 1040 out-of-sample observations}
\centering
\begin{tabular}{| c c |}
\hline
Model number & Log score \\[0.5ex] 
\hline 
1 & -4184 \\
2 & -4182 \\
3 & -4190 \\
4 & -4164 \\
\hline 
\end{tabular}
\label{T:TestScores}
\end{table}

From Table \ref{T:TestScores} we notice that Models~1-3 have similar scores, but 
Model~4 has a significantly lower log score. This would suggest that the combination 
of the time series aspects, the covariates and the multivariate modelling produces the 
most accurate out-of-sample predictions for this kind of data. Focusing on smaller 
comparisons, we first look at Models~1 and 2. There is a very small increase in 
performance by removing the independence assumption and using Frank's copula, but 
perhaps this is not worth the extra complexity gained from using a copula model. 
However between Models~3 and 4, the aforementioned increase in predictive performance
is evident, showing that when covariates are considered, the greater accuracy can be 
obtained using Frank's copula. Comparing Models~1 and 3 we see that there is a slight 
decline in predictive performance when the covariates are added. As mentioned earlier, 
one reason for this could be the violation of the assumption of independence due to the 
common covariates. However, between Models~2 and~4
the combination of covariates and copula produces the best performance. 

\section{Conclusion}
\label{S:Conclusion}

We introduced the multivariate PoARX model as an extension of the univariate PoARX 
model. Using previously established properties of the univariate PoARX model and 
copulas, we showed that our multivariate models inherit similar stability and large sample 
properties of the univariate case. We also established a law of large numbers.

For estimation of the parameters of multivariate PoARX models, we used the method of 
inference functions \citep{Joe2005}, which is computationally more efficient than the 
maximum likelihood method.  We established a central limit theorem for the 
parameters estimated by IFM. 

Our discussion of forecasting, especially predictive distributions for horizons
larger than one, seems novel even for the univariate PoARX models. In
particular, it is important to point out that the predictive distributions for
lags greater than one are not Poisson.

In the example in Section~\ref{S:Applications} we illustrated the use of bivariate 
PoARX models for modelling the counts of the number of people entering and exiting 
a building, using lagged values and covariates.  Overall, information criteria and 
out-of-sample prediction suggested that using both covariates and dependence 
parameters can provide better models. In this instance, we chose to use $k$-fold 
cross-validation coupled with the model assessment tool of the log score. However, 
this were relatively arbitrary choices, with no clearly defined methodology in place
for model assessment in general. Depending on the field of study, some people will
use information criteria, some will prefer scoring criteria. We feel that the analysis in 
Section~\ref{S:Applications} provides material for further thought and work on 
model evaluation for count data time series models.

We give here some examples of multivariate count data where multivariate PoARX models 
could be useful. The univariate PoARX model (or PARX model) has also been used to 
model the scores of a football match in \citet{AngeliniAngelis2017}. They used a 
univariate PoARX model for the goals scored by each team in the English Premier 
League and predicted the score coupling the processes independently. However, 
it has long been thought that there should be a dependence between teams competing 
in a match (see \citet{Maher1982} for the seminal paper in this area). 
Application of our multivariate PoARX model could be used to improve predictions for 
scores by considering such a dependence. Further applications could consider data 
modelled by a Poisson autoregressive process, and explore any influence of external 
factors. Such examples would be the Hyde Park Purse Snatchings and Presidential Vetoes 
from \citet{BrandtWilliams2000}, prices and times of trades made on the New York 
stock market from \citet{RydbergShephard2001} and the number of transactions per 
minute for the relevant stock from \citet{FokianosEtAl2009}.

There is also plenty of scope for further work. Our class of models uses Frank's
copula to jointly model Poisson marginal distributions. We did not have to use
Frank's copula -- if there is a belief that the dependence structure can be
captured in a different way, then other copulas can be used. Another direction
would be to consider distributions other than Poisson. We are considering the
possibility of using the renewal count distributions of
\citet{KharratBoshnakov2018}, mentioned in the introduction, which are
implemented in the R package Rcountr \citep{RCountr}.
Combining these renewal distributions with the ideas found in this paper
could lead to a fascinating new family of count time series models.
Additionally, exploring a time varying copula structure as seen in
\citet{KearneyPatton2000} may be advantageous in some applications.

\section{References}
\label{sec:references}

\bibliographystyle{abbrvnat}
\bibliography{PoARXBib}

\begin{thebibliography}{53}
\providecommand{\natexlab}[1]{#1}
\providecommand{\url}[1]{\texttt{#1}}
\expandafter\ifx\csname urlstyle\endcsname\relax
  \providecommand{\doi}[1]{doi: #1}\else
  \providecommand{\doi}{doi: \begingroup \urlstyle{rm}\Url}\fi

\bibitem[Agosto et~al.(2016)Agosto, Cavaliere, Kristensen, and
  Rahbek]{AgostoEtAl2016}
A.~Agosto, G.~Cavaliere, D.~Kristensen, and A.~Rahbek.
\newblock {Modelling corporate defaults: Poisson autoregression with exogenous
  covariates (PARX)}.
\newblock \emph{Journal of Empirical Finance}, 38:\penalty0 640 -- 663, 2016.
\newblock \doi{10.1016/j.jempfin.2016.02.007}.

\bibitem[Akaike(1974)]{Akaike1974}
H.~Akaike.
\newblock A new look at the statistical model identification.
\newblock \emph{IEEE Transactions on Automatic Control}, 19:\penalty0 716 --
  723, 1974.
\newblock \doi{10.1109/TAC.1974.1100705}.

\bibitem[Angelini and Angelis(2017)]{AngeliniAngelis2017}
G.~Angelini and L.~D. Angelis.
\newblock {PARX} model for football matches predictions.
\newblock \emph{Journal of Forecasting}, pages 1 -- 13, 2017.
\newblock \doi{10.1002/for.2471}.

\bibitem[Angus(1994)]{PIT}
J.~E. Angus.
\newblock The probability integral transform and related results.
\newblock \emph{SIAM Review}, 36\penalty0 (4):\penalty0 652 -- 654, 1994.

\bibitem[Bauwens et~al.(2006)Bauwens, Laurent, and Rombouts]{BauwensEtAl2006}
L.~Bauwens, S.~Laurent, and J.~V.~K. Rombouts.
\newblock {Multivariate GARCH models: A survey}.
\newblock \emph{Journal of Applied Econometrics}, 21:\penalty0 79 -- 109, 2006.
\newblock \doi{10.1002/jae.842}.

\bibitem[Bickel(2007)]{Bickel2007}
J.~E. Bickel.
\newblock Some comparisons among quadratic, spherical, and logarithmic scoring
  rules.
\newblock \emph{Decision Analysis}, 4\penalty0 (2):\penalty0 49 -- 65, 2007.
\newblock \doi{10.1287/deca.1070.0089}.

\bibitem[Bollerslev(1986)]{Bollerslev1986}
T.~Bollerslev.
\newblock Generalised autoregressive conditional heteroscedasticity.
\newblock \emph{Journal of Econometrics}, 31:\penalty0 307 -- 327, 1986.
\newblock \doi{10.1016/0304-4076(86)90063-1}.

\bibitem[Boshnakov(2009)]{Boshnakov2009}
G.~N. Boshnakov.
\newblock Analytic expressions for predictive distributions in mixture
  autoregressive models.
\newblock \emph{Statistical \& Probability Letters}, 79\penalty0 (15):\penalty0
  1704--1709, 2009.
\newblock \doi{10.1016/j.spl.2009.04.009}.

\bibitem[Boshnakov et~al.(2017)Boshnakov, Kharrat, and
  McHale]{BoshnakovEtAl2017}
G.~N. Boshnakov, T.~Kharrat, and I.~G. McHale.
\newblock {A bivariate Weibull count model for association football scores}.
\newblock \emph{International Journal of Forecasting}, 33\penalty0
  (2):\penalty0 458 -- 466, 2017.
\newblock \doi{10.1016/j.ijforecast.2016.11.006}.

\bibitem[Box and Jenkins(1970)]{BoxJenkins1970}
G.~E.~P. Box and G.~M. Jenkins.
\newblock \emph{Time Series Analysis: Forecasting and Control}.
\newblock Holden--Day, San Francisco, 1970.

\bibitem[Brandt and Williams(2000)]{BrandtWilliams2000}
P.~T. Brandt and J.~T. Williams.
\newblock {A linear Poisson autoregressive model: The Poisson AR(p) model}.
\newblock \emph{Political Analysis}, 9\penalty0 (2):\penalty0 164 -- 184, 2000.
\newblock \doi{10.1093/oxfordjournals.pan.a004869}.

\bibitem[Cameron and Trivedi(2013)]{CameronTrivedi2013}
A.~C. Cameron and P.~K. Trivedi.
\newblock \emph{Regression Analysis of Count Data}.
\newblock Cambridge University Press, {Second} edition, 2013.

\bibitem[Christou and Fokianos(2014)]{ChristouFokianos2014}
V.~Christou and K.~Fokianos.
\newblock Quasi-likelihood inference for negative binomial time series models.
\newblock \emph{Journal of Time Series Analysis}, 25:\penalty0 55 -- 78, 2014.
\newblock \doi{10.1111/jtsa.12050}.

\bibitem[Doukhan and Wintenberger(2008)]{DoukhanWintenberger2008}
P.~Doukhan and O.~Wintenberger.
\newblock Weakly dependent chains with infinite memory.
\newblock \emph{Stochastic Processes and their Applications}, 118\penalty0
  (11):\penalty0 1997 -- 2013, 2008.
\newblock \doi{10.1016/j.spa.2007.12.004}.

\bibitem[Durbin and Koopman(2012)]{DurbinKoopman2012}
J.~Durbin and S.~J. Koopman.
\newblock \emph{Time Series Analysis by State Space Methods}.
\newblock Number~38 in Oxford Statistical Science Series. Oxford University
  Press, {Second} edition, 2012.

\bibitem[Engle(1982)]{Engle1982}
R.~F. Engle.
\newblock Autoregressive conditional heteroscedasticity with estimates of the
  variance of {United Kingdom} inflation.
\newblock \emph{Econometrica}, 50\penalty0 (4):\penalty0 987 -- 1008, 1982.

\bibitem[Engle(2002)]{Engle2002}
R.~F. Engle.
\newblock {Dynamic conditional correlation: A simple class of multivariate
  generalized autoregressive conditional heteroskedasticity models}.
\newblock \emph{Journal of Business \& Economic Statistics}, 20\penalty0
  (3):\penalty0 339 -- 350, 2002.

\bibitem[Ferland et~al.(2006)Ferland, Latour, and Oraichi]{FerlandEtAl2006}
R.~Ferland, A.~Latour, and D.~Oraichi.
\newblock {Integer-valued GARCH processes}.
\newblock \emph{Journal of Time Series Analysis}, 27\penalty0 (6):\penalty0 923
  -- 942, 2006.
\newblock \doi{10.1111/j.1467-9892.2006.00496.x}.

\bibitem[Fokianos(2012)]{Fokianos2012}
K.~Fokianos.
\newblock Count time series models.
\newblock In T.~S. Rao, S.~S. Rao, and C.~Rao, editors, \emph{Time Series
  Analysis: Methods and Applications}, volume~30 of \emph{Handbook of
  Statistics}, chapter~12, pages 315 -- 347. Elsevier, 2012.
\newblock \doi{10.1016/B978-0-444-53858-1.00012-0}.
\newblock URL
  \url{http://www.sciencedirect.com/science/article/pii/B9780444538581000120}.

\bibitem[Fokianos et~al.(2009)Fokianos, Rahbek, and
  Tj{\o}stheim]{FokianosEtAl2009}
K.~Fokianos, A.~Rahbek, and D.~Tj{\o}stheim.
\newblock Poisson autoregression.
\newblock \emph{Journal of the American Statistical Association}, 104\penalty0
  (488):\penalty0 1430 -- 1439, 2009.
\newblock \doi{10.1198/jasa.2009.tm08270}.

\bibitem[Genest and Ne{\v s}lehov{\'a}(2007)]{GenestNeslehova2007}
C.~Genest and J.~G. Ne{\v s}lehov{\'a}.
\newblock A primer on copulas for discrete data.
\newblock \emph{The ASTIN Bulletin}, 37:\penalty0 475 -- 515, 2007.
\newblock \doi{10.1017/S0515036100014963}.
\newblock URL \url{http://www.actuaries.org/LIBRARY/ASTIN/vol37no2/475.pdf}.

\bibitem[Godambe(1991)]{Godambe1991}
V.~P. Godambe, editor.
\newblock \emph{Estimating Functions}.
\newblock Oxford Statistical Science Series. Oxford University Press, 1991.

\bibitem[Halliday and Boshnakov(2018)]{RPoARX}
J.~Halliday and G.~N. Boshnakov.
\newblock \emph{PoARX: Fit PoARX models to multivariate time series}, 2018.
\newblock R package version 0.3.2 (under development, to be published on CRAN).

\bibitem[Hannan and Deistler(1988)]{Hannan1988armaxbook}
E.~J. Hannan and M.~Deistler.
\newblock \emph{The statistical theory of linear systems}, volume~70.
\newblock SIAM, 1988.

\bibitem[Hansen et~al.(2012)Hansen, Huang, and Shek]{HansenEtAl2012}
P.~R. Hansen, Z.~Huang, and H.~H. Shek.
\newblock {Realised GARCH: A joint model for returns and realised measures of
  volatility}.
\newblock \emph{Journal of Applied Econometrics}, 27:\penalty0 877 -- 906,
  2012.
\newblock \doi{10.1002/jae.1234}.

\bibitem[Ihler et~al.(2006)Ihler, Hutchins, and Smyth]{IhlerEtAl2006}
A.~Ihler, J.~Hutchins, and P.~Smyth.
\newblock {Adaptive event detection with time-varying Poisson processes}.
\newblock In \emph{Proceedings of the 12th ACM SIGKDD International Conference
  on Knowledge Discovery and Data Mining.}, pages 207 -- 216. ACM Press, 2006.
\newblock \doi{10.1145/1150402.1150428}.

\bibitem[Inouye et~al.(2017)Inouye, Yang, Allen, and Ravikumar]{InouyeEtAl2017}
D.~I. Inouye, E.~Yang, G.~I. Allen, and P.~Ravikumar.
\newblock A review of multivariate distributions for count data derived from
  the {P}oisson distribution.
\newblock \emph{Wiley Interdisciplinary Reviews: Computational Statistics},
  9\penalty0 (3), 2017.
\newblock URL \url{https://arxiv.org/abs/1609.00066}.

\bibitem[Joe(1997)]{Joe1997}
H.~Joe.
\newblock \emph{Multivariate models and dependence concepts}.
\newblock Monographs on Statistics and Applied Probability. Chapman \& Hall
  Ltd, 1997.

\bibitem[Joe(2005)]{Joe2005}
H.~Joe.
\newblock Asymptotic efficiency of the two-stage estimation method for
  copula-based models.
\newblock \emph{Journal of Multivariate Analysis}, 94:\penalty0 401 -- 419,
  2005.
\newblock \doi{10.1016/j.jmva.2004.06.003}.
\newblock URL
  \url{http://www.sciencedirect.com/science/article/pii/S0047259X04001289}.

\bibitem[Joe(2014)]{Joe2014}
H.~Joe.
\newblock \emph{Dependence Modeling with Copulas}.
\newblock New York: Chapman and Hall/CRC., 2014.

\bibitem[Kearney and Patton(2000)]{KearneyPatton2000}
C.~Kearney and A.~J. Patton.
\newblock {Multivariate GARCH modelling of exchange rate volatility
  transmission in the European Monetary System}.
\newblock \emph{The Financial Review}, 41:\penalty0 29 -- 48, 2000.
\newblock \doi{10.1111/j.1540-6288.2000.tb01405.x}.

\bibitem[Kedem and Fokianos(2002)]{KedemFokianos2002}
B.~Kedem and K.~Fokianos.
\newblock \emph{Regression Models for Time Series}.
\newblock Wiley Series in Probability and Statistics. John Wiley \& Sons, Inc,
  2002.

\bibitem[Kharrat and Boshnakov(2016)]{RCountr}
T.~Kharrat and G.~N. Boshnakov.
\newblock \emph{Countr: Flexible univariate count models based on renewal
  processes}, 2016.
\newblock URL \url{https://CRAN.R-project.org/package=Countr}.
\newblock R package version 3.2.8.

\bibitem[Kharrat et~al.(2018)Kharrat, Boshnakov, McHale, and
  Baker]{KharratBoshnakov2018}
T.~Kharrat, G.~N. Boshnakov, I.~G. McHale, and R.~Baker.
\newblock {Flexible regression models for count data based on renewal
  processes: The Countr package (under revision)}.
\newblock \emph{Journal of Statistical Software}, 2018.

\bibitem[Koopman and Lit(2015)]{KoopmanLit2015}
S.~J. Koopman and R.~Lit.
\newblock {A dynamic bivariate Poisson model for analysing and forecasting
  match results in the English Premier League}.
\newblock \emph{Journal of the Royal Statistical Society A}, 178\penalty0
  (1):\penalty0 167 -- 186, 2015.
\newblock \doi{10.1111/rssa.12042}.

\bibitem[Kristensen and Rahbek(2015)]{KristensenRahbek2015}
D.~Kristensen and A.~Rahbek.
\newblock {Quasi-likelihood estimation of multivariate GARCH models: A weak
  dependence approach}.
\newblock Working Papers, 2015.

\bibitem[Likothanassis and Demiris(1998)]{LikothanassisDemiris1998}
S.~D. Likothanassis and E.~N. Demiris.
\newblock {ARMAX model identification with unknown process order and
  time-varying parameters}.
\newblock In A.~Proch{\'a}zka, J.~U. P. W.~J. Rayner, and N.~G. Kingsbury,
  editors, \emph{Signal Analysis and Prediction}, Applied and Numerical
  Harmonic Analysis. Birkh{\"a}user Inc., 1998.

\bibitem[Lindner and Szimayer(2005)]{LindnerSzimayer2005}
A.~M. Lindner and A.~Szimayer.
\newblock A limit theorem for copulas, 2005.
\newblock URL \url{http://hdl.handle.net/10419/31052}.
\newblock urn:nbn:de:bvb:19-epub-1802-0.

\bibitem[Lui(2012)]{Lui2012}
H.~Lui.
\newblock \emph{Some models for time series of counts}.
\newblock PhD thesis, Columbia University, 2012.

\bibitem[Maher(1982)]{Maher1982}
M.~J. Maher.
\newblock Modelling association football scores.
\newblock \emph{Statistica Neerlandica}, 36\penalty0 (3):\penalty0 109 -- 118,
  1982.
\newblock \doi{10.1111/j.1467-9574.1982.tb00782.x}.

\bibitem[McCullagh and Nelder(1989)]{McCullaghNelder1989}
P.~McCullagh and J.~A. Nelder.
\newblock \emph{Generalised Linear Models}.
\newblock Number~37 in Monographs on Statistics and Applied Probability. CRC
  press/Chapman \& Hall, {Second} edition, 1989.

\bibitem[McHale and Scarf(2011)]{McHaleScarf2011}
I.~G. McHale and P.~A. Scarf.
\newblock Modelling the dependence of goals scored by opposing teams in
  international soccer matches.
\newblock \emph{Statistical Modelling}, 11\penalty0 (3):\penalty0 219 -- 236,
  2011.
\newblock \doi{10.1177/1471082X1001100303}.

\bibitem[McNeil and Ne{\v{s}}lehov{\'a}(2009)]{McNeilNeslehova2009}
A.~J. McNeil and J.~Ne{\v{s}}lehov{\'a}.
\newblock {Multivariate Archimedean copulas, d-monotone functions and L1-norm
  symmetric distributions}.
\newblock \emph{The Annals of Statistics}, pages 3059 -- 3097, 2009.
\newblock \doi{10.1214/07-AOS556}.

\bibitem[McShane et~al.(2008)McShane, Adrian, Bradlow, and
  Fader]{McShaneEtAl2008}
B.~McShane, M.~Adrian, E.~T. Bradlow, and P.~S. Fader.
\newblock {Count models based on Weibull interarrival times}.
\newblock \emph{Journal of Business \& Economic Statistics}, 26\penalty0
  (3):\penalty0 369 -- 378, 2008.
\newblock \doi{10.1198/073500107000000278}.

\bibitem[Meitz and Saikkonen(2008)]{MeitzSaikkonen2008}
M.~Meitz and P.~Saikkonen.
\newblock {Ergodicity, mixing, and existence of moments of a class of Markov
  models with applications to GARCH and ACD models}.
\newblock \emph{Econometric Theory}, 24\penalty0 (5):\penalty0 1291 -- 1320,
  2008.

\bibitem[Nelsen(2006)]{Nelsen2006}
R.~B. Nelsen.
\newblock \emph{An Introduction to Copulas}.
\newblock New York: Springer, {Second} edition, 2006.

\bibitem[{R Core Team}(2017)]{R}
{R Core Team}.
\newblock \emph{R: A language and environment for statistical computing}.
\newblock R Foundation for Statistical Computing, Vienna, Austria, 2017.
\newblock URL \url{https://www.R-project.org/}.

\bibitem[Rydberg and Shephard(2001)]{RydbergShephard2001}
T.~H. Rydberg and N.~Shephard.
\newblock A modelling framework for the prices and times of trades made on the
  {N}ew {Y}ork stock exchange.
\newblock In W.~J. Fitzgerald, R.~L. Smith, A.~T. Walden, and P.~C. Young,
  editors, \emph{Nonlinear and Nonstationary Signal Processing}. Cambridge
  University Press, 2001.
\newblock \doi{10.2139/ssrn.164170}.

\bibitem[Schwarz(1978)]{Schwarz1978}
G.~Schwarz.
\newblock Estimating the dimension of a model.
\newblock \emph{The Annals of Statistics}, 8\penalty0 (2):\penalty0 461 -- 464,
  1978.
\newblock \doi{10.1214/aos/1176344136}.

\bibitem[Shephard and Sheppard(2010)]{ShephardSheppard2010}
N.~Shephard and K.~Sheppard.
\newblock {Realising the future: Forecasting with high-frequency-based
  volatility (HEAVY) models}.
\newblock \emph{Journal of Applied Econometrics}, 25:\penalty0 197 -- 231,
  2010.
\newblock \doi{10.1002/jae.1234}.

\bibitem[Sklar(1959)]{Sklar}
A.~Sklar.
\newblock Fonctions de r{\'e}partition {\`a} n dimensions et leurs marges.
\newblock \emph{Publications de l'Institut de statistique de l'Universit{\'e}
  de Paris}, 8:\penalty0 229 -- 231, 1959.

\bibitem[Stone(1974)]{Stone1974}
M.~Stone.
\newblock Cross-validatory choice and assessment of statistical predictions.
\newblock \emph{Journal of the Royal Statistical Society B}, 36\penalty0
  (2):\penalty0 111 -- 147, 1974.

\bibitem[Wei{\ss}(2008)]{Weiss2008}
C.~H. Wei{\ss}.
\newblock {Serial dependence and regression of INARMA models}.
\newblock \emph{Journal of Statistical Planning and Inference}, 138\penalty0
  (10):\penalty0 2975 -- 2990, 2008.
\newblock \doi{10.1016/j.jspi.2007.11.009}.

\end{thebibliography}

\begin{appendix}

\section{Proof of Theorem \ref{thm:StationaryPoARX}}
\label{A:ProofStationary}

\begin{proof}
We start with the case $\rho = 0$ (independent time series). As each 
univariate time series satisfies the assumptions of Theorem 
\ref{thm:StationaryPoARX}, we know they are individually stationary and 
ergodic from \citet{AgostoEtAl2016}. Furthermore, the joint distribution is 
well defined as the product of each univariate probability. Hence the joint 
distribution is stationary. Lastly, for sets $A_1, \dots A_\Kdim  \in \mathbb{R}$, 
we have that
\begin{align*}
P &((Y_t^1, \dots, Y_t^\Kdim) \in (A_1, \dots A_\Kdim) \given 
\infoF[t-l]^1, \dots, \infoF[t-l]^\Kdim ) \\
 &= P (Y^1_t \in A_1 \given \infoF[t-l]^1 ) \cdot \dots 
  \cdot P (Y^\Kdim_t \in A_\Kdim \given \infoF[t-l]^\Kdim ).
\end{align*}
Using Theorem 1 from \citep{AgostoEtAl2016}, we have that 
\begin{equation*}
  P(Y^j_t \in B \given \infoF[t-l]^j ) \to P(Y^j_t \in B) 
  \ \ \text{as $l \to \infty$},
  \qquad \text{for $j = 1, \dots, \Kdim$}.
\end{equation*}
Hence, 
\begin{gather*}
P ((Y_t^1, \dots, Y_t^\Kdim) \in (A_1, \dots A_\Kdim) \given 
 \infoF[t-l]^1, \dots, \infoF[t-l]^\Kdim ) \to 
P ((Y_t^1, \dots, Y_t^\Kdim) \in (A_1, \dots A_\Kdim)) \\
 \text{as } l \to \infty, \text{ for any } A_1, \dots, 
  A_\Kdim \in\mathbb{R}.
\end{gather*}
This proves that independent PoARX processes are weakly dependent, 
therefore stationary and ergodic.

Now we move onto the case when $\rho \neq 0$. As before, we know that 
each time series in a multivariate PoARX model is stationary and ergodic.
Using similar arguments to \citet{MeitzSaikkonen2008}  we show the 
required joint result. Proving that the joint distribution is stationary is 
straightforward -- when $\rho \neq 0$, the cumulative mass function of 
the joint model is a simple, well-defined transformation of the univariate 
time series, as seen for the bivariate case in Equation~\eqref{E:jointCDF}. 
\begin{equation}
\label{E:jointCDF}
\begin{aligned}
F(y_t^1, y_t^2) &= \Pr(Y_t^1 \leq y_t^1, Y_t^2 \leq y_t^2)  \\
 &= -\frac{1}{\rho}  \log \left( 1 + \frac{ \left( 
  \exp \left( - \rho F_1(y_t^1) \right) - 1 \right) 
   \left( \exp \left( - \rho F_2(y_t^2) \right) - 1 \right) 
  }{\e{-\rho} - 1 } \right).
\end{aligned}
\end{equation}
To show the ergodicity, we must work harder. We show that the property of 
$\tau$-weak dependence holds for any number of dimensions using induction.

Start with $\Kdim=2$. Let 
\begin{equation*}
\begin{aligned}
\infoF[t-l]^1 = \sigma \left( Y^1_{t-l}, \lambda^1_{t-l}, 
 x^1_{t-l}, Y^1_{t-l-1}, \lambda^1_{t-l-1}, x^1_{t-l-1}, \dots \right) \\ 
\infoF[t-l]^2 = \sigma \left( Y^2_{t-l}, \lambda^2_{t-l}, 
 x^2_{t-l}, Y^2_{t-l-1}, \lambda^2_{t-l-1}, x^2_{t-l-1}, \dots \right)
\end{aligned}
\end{equation*}
and consider, for any sets $A, B \in \mathbb{R}$,
\begin{equation}
\label{E:2dConditionalProb}
\begin{aligned}
P &((Y_t^1, Y_t^2) \in (A, B) \given \infoF[t-l]^1, \infoF[t-l]^2 ) \\
&= P (Y^1_t \in A \given Y^2_t \in B, \infoF[t-l]^1, \infoF[t-l]^2 ) 
P (Y^2_t \in B \given \infoF[t-l]^1, \infoF[t-l]^2 ) \\
 &= P (Y^1_t \in A \given Y^2_t \in B, \infoF[t-l]^1 ) 
  P (Y^2_t \in B \given \infoF[t-l]^2 ).
\end{aligned}
\end{equation}
Using the definition of $\tau$-weak dependence inherited by univariate 
PoARX processes, 
\begin{equation*}
P(Y^2_t \in B \given \infoF[t-l]^2 ) \to 
 P(Y^2_t \in B) \text{ as } l \to \infty.
\end{equation*}
Using Equation~\eqref{E:jointCDF}, 
$P (Y^1_t \in A \given Y^2_t \in B, \infoF[t-l]^1)$ is a simple 
transformation of $P (Y^1_t \in A \given \infoF[t-l]^1)$. 
As $Y_t^1$ is a univariate PoARX process, 
\begin{equation*}
P(Y^1_t \in A \given \infoF[t-l]^1 ) \to P(Y^1_t \in A) 
 \text{ as } l \to \infty.
\end{equation*}
By applying the simple 
transformation for the conditional probability we find that 
\begin{equation*}
P (Y^1_t \in A \given Y^2_t \in B, \infoF[t-l]^1) \to 
 P (Y^1_t \in A \given Y^2_t \in B) \text{ as } l \to \infty.
 \end{equation*}
Thus, using Equation~\eqref{E:2dConditionalProb},
\begin{gather*}
P ((Y_t^1, Y_t^2) \in (A, B) \given \infoF[t-l]^1, 
 \infoF[t-l]^2 ) \to 
P ((Y_t^1, Y_t^2) \in (A, B)) \\
 \text{as } l \to \infty, \text{ for any } A,B.
\end{gather*}
This shows $\tau$-weak dependence, hence the bivariate PoARX copula model 
$(Y_t^1, Y_t^2)$ is stationary and ergodic.

Assume that this holds for $\Kdim=k$. Let 
$Y_t^{1:k} = (Y_t^1, \dots, Y_t^k)$. Then the assumption states that 
$Y_t^{1:k}$ is weakly dependent and hence ergodic.

Now we will prove for $\Kdim=k+1$. Let
\begin{equation*}
\begin{gathered}
\infoF[t-l]^j = \sigma \left( Y^j_{t-l}, \lambda^j_{t-l}, 
 x^j_{t-l}, Y^j_{t-l-1}, \lambda^j_{t-l-1}, x^j_{t-l-1}, \dots \right), 
  \quad j = 1, \dots, k, \\
\infoF[t-l]^{1:j} = \sigma \left( Y^{1:j}_{t-l}, 
 \lambda^{1:j}_{t-l}, x^{1:j}_{t-l}, Y^{1:j}_{t-l-1}, 
 \lambda^{1:j}_{t-l-1}, x^{1:j}_{t-l-1}, \dots \right), 
  \quad j = 2, \dots, k,
\end{gathered}
\end{equation*}
and for any sets $A \in \mathbb{R}$, and $B \in \mathbb{R}^k$, 
consider the following
\begin{align*}
P&((Y_t^{k+1}, Y_t^{1:k}) \in (A, B) \given \infoF[t-l]^{k+1}, 
 \infoF[t-l]^{1:k} ) \\
 &= P (Y^{k+1}_t \in A \given Y^{1:k}_t \in B, \infoF[t-l]^{k+1}, 
 \infoF[t-l]^{1:k} ) 
P (Y^{1:k}_t \in B \given \infoF[t-l]^{k+1}, 
 \infoF[t-l]^{1:k} ) \\
 &= P (Y^{k+1}_t \in A \given Y^{1:k}_t \in B, \infoF[t-l]^{k+1} ) 
  P (Y^{1:k}_t \in B \given \infoF[t-l]^{1:k} ).
\end{align*}
Because we know  $Y_t^{1:k}$ is weakly dependent from the assumption made, 
we have that
$$P(Y^{1:k}_t \in B \given \infoF[t-l]^{1:k} ) \to P(Y^{1:k}_t 
 \in B) \text{ as } l \to \infty.$$
$P (Y^{k+1}_t \in A \given Y^{1:k}_t \in B, \infoF[t-l]^1)$ can be 
thought of as a simple, well-defined transformation of 
$P (Y^{k+1}_t \in A \given \infoF[t-l]^{1:k})$. As $Y_t^{k+1}$ is a 
univariate PoARX process, 
\begin{equation*}
P(Y^{k+1}_t \in A \given \infoF[t-l]^{k+1} ) \to 
 P(Y^{k+1}_t \in A) \text{ as } l \to \infty,
\end{equation*}
 and as a result,
 \begin{equation*}
P (Y^{k+1}_t \in A \given Y^{1:k}_t \in B, \infoF[t-l]^{k+1}) \to 
 P (Y^{k+1}_t \in A \given Y^{1:k}_t \in B) \text{ as } l \to \infty
\end{equation*}
follows from the transformation. Thus,
\begin{gather*}
P ((Y_t^{k+1}, Y_t^{1:k}) \in (A, B) \given \infoF[t-l]^{k+1}, 
  \infoF[t-l]^{1:k} ) \to 
P ((Y_t^{k+1}, Y_t^{1:k}) \in (A, B)) \\
 \text{as } l \to \infty, \text{ for any } A \in \mathbb{R}, 
  B \in \mathbb{R}^k.
\end{gather*}
This shows that $Y_t^{1:(k+1)}$ is weakly dependent, hence ergodic, so the 
induction process holds.

We have now proven that the multivariate PoARX model, whether coupled 
independently or using Frank's copula, is jointly stationary and ergodic.
\end{proof}

\section{Proof of Theorem \ref{thm:CLTPoARX}}
\label{A:ProofCLT}

\begin{proof}
In the calculation of the IFM estimates $\vartheta$ we require the 
separate optimisations of $\Kdim$ marginal likelihoods. Each of 
these marginal likelihoods is a univariate PoARX process, and 
therefore under Assumptions 1-5 fulfils the requirements of 
Theorem 2 in \citet{AgostoEtAl2016}. Thus, for the parameters in 
$\theta^j$ for each $j = 1, \dots, \Kdim$,
\begin{equation*}
\sqrt{n} (\tilde{\theta}^j - \theta^j_0) \overset{d}{\to} 
 \mathcal{N} \left( 0, H_j^{-1}( \theta^j_0) \right), \qquad
 H_j(\theta^j) := - \E \left( 
 \frac{\partial^2 l_j^*(\theta^j)}{\partial \theta^j \partial (\theta^j)^\top} 
 \right).
\end{equation*}

First we consider the case of the PoARX models coupled independently, so there is no
dependence parameter to estimate. 
We should assume further here that there exists no 
condition that allows the observations to become dependent on each other.
Since any linear combination of the PoARX models must also follow a normal distribution,
we have the following result. Using 
$\theta = \vartheta_{(-\rho)} = (\theta^1, \dots, \theta^\Kdim)$ to denote the set of unknown 
parameters,
\begin{equation*}
\sqrt{n} (\tilde{\theta} - \theta_0) \overset{d}{\to}  \mathcal{N} \left( 0, V \right).
\end{equation*}
In this case, $V$ is a block diagonal matrix, where $H^{-1}_j(\theta_{j,0})$ are the non-zero 
entries. 
\begin{equation*}
V = \begin{bmatrix}
   H^{-1}_1(\theta^1_0) & 0 & \dots & 0 \\
   0 & H^{-1}_2(\theta^2_0) & \dots & 0 \\
   \vdots & \vdots & \ddots & \vdots \\
    0 & 0 & \cdots & H^{-1}_\Kdim(\theta^\Kdim_0)
\end{bmatrix}.
\end{equation*}

Now, in the case where Frank's copula is used to jointly model the PoARX models,
we require estimation of the $\rho$ using the profile log-likelihood with 
$\theta = \tilde{\theta}$. The regularity conditions for the theory of inference functions 
\citep{Godambe1991} hold for the dependence parameter, so we can use the asymptotic 
result,
\begin{equation*}
\sqrt{n} (\tilde{\rho} - \rho_0) \overset{d}{\to} 
 \mathcal{N} \left( 0, H_\rho^{-1}(\rho_0) \right), \qquad
 H_\rho(\rho) := - \E \left( 
 \frac{\partial^2 l^*}{\partial \rho \partial \rho^\top} 
 (\tilde{\theta}^1, \dots, \tilde{\theta}^\Kdim, \rho)
 \right).
\end{equation*}
Collecting all unknown parameters together, the theory of inference functions states 
that
\begin{equation*}
\sqrt{n} (\tilde{\vartheta} - \vartheta_0) \overset{d}{\to} 
 \mathcal{N} \left( 0, V \right),
\end{equation*}
for some asymptotic covariance matrix $V$. This matrix $V$ is given by
\begin{equation*}
V = (-D_g^{-1}) M_g (-D_g^{-1})^\top
\end{equation*}
where $M_g = \Cov (g(Y; \vartheta) )$ and 
 $D_g = \E \left(  \frac{\partial g(Y; \vartheta)}{\partial \vartheta^\top} \right)$ 
with $g = \left( \partial l_1 / \partial \theta_1, \dots \partial l_\Kdim /
\partial \theta_\Kdim, \partial l / \partial \rho \right)^\top$. 
Let $\mathcal{J}_{jk} = \Cov \left( g_j, g_k \right)$ be the covariance matrix between 
$g_j$ and $g_k$, 
and $\mathcal{I}_{jk} = - \E \left( \partial^2 l / \partial \theta^j \partial (\theta^j)^\top \right)$ 
for $1 \leq j,k \leq \Kdim$. This means that $\mathcal{I}_{jj} = H_j(\theta^j)$ is the 
Fisher information matrix for model. Lastly, we define
$\mathcal{I}_{mk} = - \E \left( \partial^2 l / \partial \theta^j \partial \rho \right)$ for 
$k = 1, \dots, \Kdim$. 
With this notation, the matrices can be partitioned as follows,
\begin{equation*}
- D_g = 
\begin{bmatrix}
    \mathcal{I}_{11} & 0 &  \dots & 0 & 0  \\
    0 & \mathcal{I}_{22} & \dots & 0 & 0  \\
    \vdots & \vdots & \ddots & \vdots & \vdots  \\
    0 & 0 & \dots & \mathcal{I}_{\Kdim \Kdim} & 0  \\
    \mathcal{I}_{m1} & \mathcal{I}_{m2}  & \dots 
      & \mathcal{I}_{m \Kdim} & \mathcal{I}_{mm}
\end{bmatrix},
\qquad
M_g = 
\begin{bmatrix}
    \mathcal{J}_{11} & \mathcal{J}_{12} &  \dots & 
      \mathcal{J}_{1\Kdim} & 0  \\
    \mathcal{J}_{21} & \mathcal{J}_{22} & \dots & 
      \mathcal{J}_{2\Kdim} & 0  \\
    \vdots & \vdots & \ddots & \vdots & \vdots  \\
    \mathcal{J}_{\Kdim 1} & \mathcal{J}_{\Kdim 2} & \dots & 
      \mathcal{J}_{\Kdim \Kdim} & 0  \\
    0 & 0 & \dots & 0 & \mathcal{J}_{mm}
\end{bmatrix}.
\end{equation*}
The only non-trivial calculations are $\Cov \left( g_j, g_d \right) = 0$ for
 $j = 1, \dots \Kdim$.  The proof of this can be found in the Appendix of 
 \citet{Joe2005}.

\end{proof}

\end{appendix}

\end{document}